\let\oldvec\vec% Store \vec in \oldvec
\let\vec\oldvec% Restore \vec from \oldvec
\newtheorem{theorem}{Theorem}[section]
\newtheorem{corollary}[theorem]{Corollary}
\title{Budget-restricted utility games with ordered strategic decisions\,\thanks{This work was partially supported by the German Research Foundation (DFG) within the Collaborative Research Centre ``On-The-Fly Computing'' (SFB 901) and by the EU within FET project MULTIPLEX under contract no.\ 317532.}\,
\thanks{An extended abstract of this paper has been accepted for publication in the proceedings of the 7th International Symposium on Algorithmic Game Theory (SAGT), available at www.springerlink.com}}
\author{Maximilian Drees \and S\"oren Riechers \and Alexander Skopalik}
\date{
    Heinz Nixdorf Institute \& Computer Science Department\\[0.2em]
    University of Paderborn (Germany)\\[0.2em]
    F\"urstenallee 11, 33102 Paderborn\\[0.2em]
}
\begin{document}

\maketitle

\begin{abstract}
 We introduce the concept of \emph{budget games}.
 Players choose a set of tasks and each task has a certain demand on every resource in the game.
 Each resource has a budget.
 If the budget is not enough to satisfy the sum of all demands, it has to be shared between the tasks.
 We study strategic budget games, where the budget is shared proportionally.
 We also consider a variant in which the order of the strategic decisions influences the distribution of the budgets.
 The complexity of the optimal solution as well as existence, complexity and quality of equilibria are analyzed.
 Finally, we show that the time an ordered budget game needs to convergence towards an equilibrium may be exponential.
\end{abstract}

\section{Introduction}
 
Recent advancements of network technology enabled and simplified outsourcing of processing and storing information to remote facilities. The offering of such services in a competitive environment has become known as cloud computing.
The competitive aspect is twofold. 
On the one hand, customers compete over the allocation of various types of services and resources like bandwidth, memory space, computing power etc. These resources are usually limited in capacity and as soon as the demand exceeds that capacity customers' demand can only be satisfied partially.
On the other hand, the service providers face strategic decisions in the markets which have to take into account the budget of their clients. As long as a client can afford all the desired products, his budget has no consequence. But once their total costs exceed his budget, he has to split it between them. When deciding to offer a product, a provider therefore has to consider the remaining budgets of the interested clients.

We study this in a game theoretic setting called {\em budget games} in which several tasks (or products) have a certain demand for resources (or money) and the resources (or clients) have budgets. As long as the sum of the demands does not exceed the budget, the demands can be completely satisfied, otherwise only partially.
For example in scheduling, where tasks are allocated to one or more servers. Each server disposes of a limited amount of computational capacity, space or bandwidth and when it runs too many tasks, this capacity has to be split between them. Naturally, every job will aim to obtain as much capacity as it needs, which may vary between the different servers. Also, not every server combination may be possible for every task. 

%To model such situations introduce the concept of {\em budget games}, which revolve around tasks and resources. Every resource has a budget. A player chooses a combination of tasks to maximize its own utility, which is the sum of the individual tasks' utilities. The utility of a single task is sum of shares of the resource budgets. Each task has a non-negative impact on each resource, which serves as an upper bound for the share the task may obtain from the resource. If its budget is not exceeded by the sum of all impacts, then the utility of the task assumes the maximum. Otherwise, the budget is split between the tasks.

We study budget games as strategic games as well as in a variant that takes into account temporal aspect. 
Strategic games are often analyzed as one-shot games which do not capture situations like a new provider  entering a market having a disadvantage against those already established. The clients prioritize the products they already know and spend only what may be left of their budget on what the new provider offers. As a result, he cannot gain more than what is left of a clients budget.

In the strategic game the utility of a resource is shared proportional among all tasks. In the second approach, called {\em ordered} budget games, we also take into account the order in which the tasks arrived.
Each resource has an ordering of the tasks and its utility is allocated to the tasks in that order. If a player decides to deviate to another strategy, the tasks that are allocated to different resources are moved to the last position in the ordering of those resources.

% strategy profile in which the utilities are valid, which means the utilities of each service are at most the respective impact and no budget is hold back, i.\,e. each resource spends as much of its budget as possible. From there on, every time a player decides to offer a new service, he can at most obtain the remaining budget of each client, that is the share not yet obtained by any other player. Since his possible product combinations are limited, offering a new product may also involve removing another one, which in turn increases the amount of available budget on some resources.
%
\subsubsection{Related Work}
There are several models which share similarities to budget games. Li et al. \cite{Li05} developed cost-sharing mechanisms for set cover games. Every element $e_i$ has a coverage requirement $r_i$, every set $S_j$ has a cost $c_j$ and the multiplicity of $e_i$ in $S_j$ is $k_{j,i}$. The multiplicity states how many times $e_i$ is covered by $S_j$. The sets are chosen on the condition that $e_i$ has to be covered at least $r_i$ many times. The total costs are distributed between the elements such that the result is $\frac{1}{\ln(d_\text{max})}$-budget-balanced and fair under core. In \cite{Li10}, Li et al. analyze set cover games in which the elements are the agents and declare bids for the sets. They give mechanisms which decide which elements will be covered, which sets are used and how much each element is charged.

Other games have been defined on the facility location problem \cite{Jain01}. In \cite{Ahn04}, Ahn et al. studied the Voronoi game in which two players alternately choose their facilities and the space they control is determined by the nearest-neighbor rule. They give a winning strategy for player 2, although player 1 can ensure that the advantage is only arbitrarily small.

Also related to our model are congestion games. Rosenthal \cite{Rosenthal73} showed that they always have a pure Nash equilibrium. Milchtaich \cite{Milchtaich96} extended this result to weighted congestion games with player-specific payoff functions, where the utility of player $i$ playing strategy $j$ is a monotonically nonincreasing function $S_{ij}$ of the total weight of all players with the same strategy. Mavronicolas et al. \cite{Mavronicolas07} considered the special case of latency functions $f_{ie} = g_e \odot c_{ie}$, where $g_e$ is the latency function of resource $e$, $c_{ie}> 0$ and $\odot$ is the operation of an abelian group. A characterization of the class of congestion games with pure Nash equilibria was recently given by Gairing and Klimm \cite{Gairing}. They showed that the player-specific cost functions of the weighted players have to be affine transformations of each other as well as be affine or exponential. These games emphasize that the impact of the same strategic choice may vary between the players.

\noindent
Finally, the strategic version of our game is a basic utility game.
One property of basic utility games is that the social welfare function is submodular and non-decreasing, which
is used in Section \ref{budget:problemComplexity} to approximate the optimal solution for any of our games.
Vetta \cite{Vetta02} showed that any basic utility game has a Price of Anarchy at most 2.
We prove the same for the non-strategic ordered budget games.

\subsubsection{Our Contribution}

We show that computing an optimal allocation for both variants of budget games is {\sf NP}-hard in general but can be approximated within a factor of $1-\nicefrac{1}{e}$ if the strategies of the players have a matroid structure.

In standard budget games a stable solution, i.\,e., a pure Nash equilibrium, might not exist and deciding if one exists is {\sf NP}-hard.
For ordered budget games the situation is more positive. Nash and even strong equilibria exist and can be computed in polynomial time.
We show that this complexity result cannot be extended to super strong equilibria as these are {\sf NP}-hard to compute. 
Moreover, we compare the performance of equilibria to optimal solutions and show that the price of (strong) stability is $1$ and the price of (strong) anarchy is $2$.
Concerning the convergence of repeated improvement steps we show that the dynamic that emerges
if players repeatedly make improving moves converges towards a Nash equilibrium and this is even true for simultaneous moves of several players if ties are broken in a certain way.
However, there are games and initial strategy profiles in which the convergence process may take exponentially long.

\section{Model}
\label{model}
A \emph{budget game} $\mathcal{B}$ is a tuple $(\mathcal{N}, \mathcal{R}, (b_r)_{r \in \mathcal{R}}, (\mathcal{S}_i)_{i \in \mathcal{N}}, (u_i)_{i \in \mathcal{N}})$, where the set of players is denoted by $\mathcal{N} = \{1,\ldots,n\}$, the set of resources by $\mathcal{R} = \{1,\ldots,m\}$, and the budget of resource $r$ by $b_r$. Each player $i$ has a set of tasks $\mathcal{T}_i = \{ t^i_1,\ldots,t^i_{q_i} \}$ with $t^i_{k} \in \mathbb{R}^m_{\geq 0}$.
 For a task $t \in \mathcal{T}_i$, we use $t(r)$ to denote the demand for resource $r$.
 We say a task $t$ is \textit{connected} to a resource $r$ if $t(r) > 0$.
 If the task demands the full resource, i.\,e. $t(r) = b_r$, we say that $t$ is \textit{fully connected} to $r$.
Now, let $\mathcal{T} = \cup_{i \in \mathcal{N}} \mathcal{T}_i$ denote the set of all tasks.
A strategy of a player is a set of tasks and $\mathcal{S}_i \subseteq 2^{\mathcal{T}_i}$ denotes the set of strategies available to player $i$. $\mathcal{S} = \mathcal{S}_1 \times \ldots \times \mathcal{S}_n$ is the set of strategy profiles and $u_i: \mathcal{S} \rightarrow \mathbb{R}_{\geq 0}$ denotes the private utility function player $i$ strives to maximize.
For a strategy profile $s = (s_1,\ldots,s_n)$, let $u_{t,r}(s): \mathcal{S} \rightarrow \mathbb{R}_{\geq 0}$ denote the utility of $t$ from $r$ and $u_i(s) := \sum_{t \in s_i} \sum_{r \in \mathcal{R}} u_{t,r}(s)$. We demand that the utilities are always valid, i.\,e. $\sum_{i \in \mathcal{N}} \sum_{t \in s_i}  u_{t,r} \le b_r$ for every $r \in \mathcal{R}$.

% $$u_{t,r}(s) := \left\{\begin{array}{cl} f(r) & \mbox{ if } \sum_{\ell \in \mathcal{N}} \sum_{f_k \in s_\ell} f_k(r) \leq b_{r} \\ \check{f}(r) & \mbox{ else} \end{array}\right.$$
%such that $\check{f}(r) \leq f(r)$ and $\sum_{\ell \in \mathcal{N}} \sum_{f_k \in s_\ell} u_{f_k,r}(s) = \min \left( \sum_{\ell \in N} \sum_{f_k \in s_\ell} f_k(r) \ , \ b_r \right)$. 

%The specific rule-set on which the utilities are computed is called a \emph{utility distribution}.
 We consider two different utility distribution rules and call the games \emph{standard budget games} (or simply budget games) and \emph{ordered budget games}.
 In a standard budget game, the utility of task $t \in s_i$ from resource $r$ is defined as
$u_{t,r}(s) := \min\left(t(r), \nicefrac{b_r \cdot t(r)}{\left(\sum_{j \in \mathcal{N}} \sum_{t' \in s_{j}} t'(r)\right)} \right).$

In an ordered budget game, the utilities do not only depend on the current strategy profile, but also on the course of the game up to this point. To that end a strategy profile is augmented by an ordering of the tasks for each resource.
 Let $\prec=(\prec_e)_{r \in \mathcal{R}}$ be a vector of total orders on the set $\mathcal{T}$. The utility of a task $t \in s_i$ in $(s,\prec)$ is  $u_{t,r}(s,\prec) :=  t(r)$ if $\sum_{j \in \mathcal{N}} \sum_{t' \in s_j \text{ with } t' \prec_r t} t'(r) \leq b_{r}$ and $u_{t,r}(s,\prec) := \max\left(0, b_r - \sum_{j \in \mathcal{N}} \sum_{t' \in s_j \text{ with } t' \prec_r t} t'(r)\right)$ otherwise.

%
%$$u_{t,r}(s,\prec) := \left\{\begin{array}{cl} t(r) & \mbox{ if } \sum_{j \in \mathcal{N}} \sum_{t' \in s_j \text{ with } t' \prec_r t} t'(r) \leq b_{r} \\ 
%\max (0, b_r - \sum_{j \in \mathcal{N}} \sum_{t' \in s_j \text{ with } t' \prec_r t} t'(r))  & \mbox{ else} \end{array}\right.$$
%\todo{das sieht scheisse aus}

When player $i$ changes its strategy from $s_i$ to $s'_i$ all new tasks are 
moved to the end of $\prec_r$ for all resources.
Let $\tau = s'_i \setminus s_i$ then the new state is $((s'_i,s_{-i}),\prec')$ with $x \prec'_r y$ if and only if $x \prec_r y$ and $x \prec'_r t$  for all $x,y \in \mathcal{T} \setminus \tau$ and $t \in \tau$.
Here, the order for given tasks of the same player is arbitrary, as it does not change the utility function of the specific player.
For strategy changes of a coalition $C \subseteq \mathcal{N}$ of players the definition is analogous and we set $\tau = \cup_{i \in C} (s'_i \setminus s_i$).
For the ordering between tasks in $\tau$, we show two tie-breaking rules in Section \ref{budget:orderedgames}.% prioritize players with a greater index, i.\,e. if $x$ and $y$ are tasks of different players $j$ and $k$, respectively, $x \prec'_r y$ holds if and only if $j>k$.

% the utility of $f \in t_i$, $f \notin s_i$ is set to
%$$u_{f,j}(s_{-i},t_i) := \left\{\begin{array}{cl} f(j) & \mbox{ if } \sum_{\ell \in \mathcal{N}} \sum_{f_k \in s_\ell} f_k(j) \leq b_{j} \\ b_j - \sum_{\ell \in \mathcal{N}} \sum_{f_k \in s_i} f_k(j) & \mbox{ else} \end{array}\right.$$
%For ordered budget games, the utilities of the players do not only depend on the current strategy profile. Instead, we define the current state of a game by a sequence $(s^0,s^1,\ldots,s^t)$ of strategy profiles, where $s^{i+1}$ originated from $s^i$ by the strategy change of a single player. We demand that the initial strategy profile $s^0$ is valid.

A pure Nash equilibrium (NE) is a strategy profile $s$ in which no player has an incentive to deviate, i.\,e., there is no  $s'_i \in \mathcal{S}_i$ such that $u_i(s'_i,s_{-i}) > u_i(s)$ for all $i \in \mathcal{N}_i$. A strong equilibrium is a profile $s$ in which there is no coalition $C\subseteq \mathcal{N}$ which can improve, i.\,e., there is no $s'_C \in \times_{i\in C} \mathcal{S}_i$ such that $u_i(s'_C,u_{-C}) >u_i(s_C,s_{-C})$ for all $i \in \mathcal{N}$. For super strong equilibrium we only demand that this inequality is strict for at least one player.

For a strategy profile $s$, $u(s) := \sum_{i \in \mathcal{N}} u_i(s)$ is the social welfare of $s$. The optimal solution of $\mathcal{B}$ is the strategy profile $opt$ with $u(opt) \geq u(s)$ for every $s \in \mathcal{S}$. The price of anarchy (PoA) is defined as $\max \frac{u(opt)}{u(s)}$, the price of stability (PoS) as $\min \frac{u(opt)}{u(s)}$, where $s$ is a NE. Analogously the price of (super) strong anarchy and stability is defined with $s$ being a (super) strong equilibrium. 

\section{Complexity of the Optimal Solution}
\label{budget:problemComplexity}

For any form of budget game, the social welfare is independent of the order of the tasks. The following results hold for both standard and ordered budget games.
% Therefore, it suffices to find an optimal strategy profile. If the budget of every resource is 1 and every impact is either 0 or 1, then computing the optimal solution is equivalent to solving the maximum coverage problem. 

\begin{theorem}
 Computing the optimal solution for a budget game with respect to social welfare is {\sf NP}-hard,
 even if the tasks and strategy sets of all players are equal and the strategies are restricted to singletons.
\end{theorem}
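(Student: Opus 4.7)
The plan is a reduction from Maximum $k$-Coverage, whose decision version is well known to be {\sf NP}-hard: given a ground set $U=\{e_1,\dots,e_m\}$, a family $\mathcal{C}=\{S_1,\dots,S_\ell\}$ of subsets of $U$, and integers $k,L$, decide whether some $k$ members of $\mathcal{C}$ cover at least $L$ elements. From such an instance I build a budget game with $n=k$ players, one resource $r_j$ with budget $b_{r_j}=1$ for each element $e_j$, and one task $t_S$ per set $S\in\mathcal{C}$ with demand $t_S(r_j)=1$ if $e_j\in S$ and $0$ otherwise. All players share the identical task set $\mathcal{T}_i=\{t_S:S\in\mathcal{C}\}$ and the identical singleton strategy set $\mathcal{S}_i=\bigl\{\{t_S\}:S\in\mathcal{C}\bigr\}$, matching the additional hypotheses of the theorem.

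Correctness hinges on the observation highlighted at the start of the section: in either the standard or the ordered variant, the total utility extracted from a resource $r$ under profile $s$ equals $\min\bigl(b_r,\sum_{j,\,t\in s_j}t(r)\bigr)$ (in the ordered case the first tasks in $\prec_r$ are paid in full, one boundary task receives the remaining budget, and the rest receive zero, so the per-resource sum telescopes to the same expression). Plugging in $b_r=1$ and $0/1$ demands, this inner minimum equals $1$ exactly when some chosen task touches $r$, so $u(s)$ is precisely the number of elements of $U$ covered by the $k$ selected sets. Therefore a strategy profile attains social welfare at least $L$ iff the Max $k$-Coverage instance admits a cover of size $L$, which yields the reduction.

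The one conceptual step is recognising that the social welfare collapses to $\sum_r\min(b_r,D_r(s))$ regardless of the distribution rule; this is precisely what allows a single reduction to establish {\sf NP}-hardness for both the standard and the ordered game at once. Everything else — the verification that the construction is polynomial and that all structural restrictions (singletons, identical task and strategy sets) are met — is routine.
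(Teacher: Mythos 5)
Your reduction is the same as the paper's: both reduce from maximum ($k$-)set coverage with one unit-budget resource per element, one $0/1$-demand task per set shared identically by $k$ players restricted to singleton strategies, and both rest on the observation that the welfare extracted from each resource is $\min(b_r,\sum_t t(r))$ in either distribution rule, so welfare counts covered elements. The argument is correct and matches the paper's proof essentially step for step.
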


\begin{proof}
 We give a reduction from the maximum set coverage problem.
 An instance $\mathcal{I} = (\mathcal{U},\mathcal{W},w)$ of this problem is given by a set $\mathcal{U}$,
 a collection of subsets $\mathcal{W} = \{\mathcal{W}_1,\ldots,\mathcal{W}_q\}$ with $\mathcal{W}_i\subseteq \mathcal{U}$
 and an integer $w \in \mathbb{N}$.
 The task is to cover as many elements from $\mathcal{U}$ as possible by choosing at most $w$ sets from $\mathcal{W}$.
 
 From $\mathcal{I}$, we create a budget game $\mathcal{B} = (\mathcal{N}, \mathcal{R}, (b_r)_{r \in \mathcal{R}}, (\mathcal{S}_i)_{i \in \mathcal{N}}, (u_i)_{i \in \mathcal{N}})$.
 We create a number of $w$ players, that is $\mathcal{N} = \{1,\ldots,w\}$.
 Now, let the set of resources correspond to the set $\mathcal{U}$, i.\,e. $\mathcal{R} = \mathcal{U}$,
 and set the budget of each resource $j \in \mathcal{R}$ to $b_j = 1$.
 For each player $i$, we define the set of tasks as $\mathcal{T}_i = \{t_{\mathcal{W}_1},\ldots,t_{\mathcal{W}_q}\}$, 
 where the demands of a task are set to $t_{\mathcal{W}_k}(r) = 1$ for $r \in \mathcal{W}_k$ and $t_{\mathcal{W}_k}(r) = 0$ otherwise.
 Note that the set of tasks is equal for all players.
 Finally, we set the strategy space to be $\mathcal{S}_i = \{ \{t_{\mathcal{W}_k}\} \ | \ 1 \leq k \leq q \}$ for all $i \in \mathcal{N}$.
 %As the social welfare is independent of the definition of the private utility functions $u_i$ (as long as no resource is ``wasted''),
 %they can be defined arbitrarily.
 
 Given a strategy profile $s$ for $\mathcal{B}$, the social welfare increases by 1 for every resource $r$ that is used by some task.
 This applies if and only if there is a set $\mathcal{W}_k$ with $r \in \mathcal{W}_k$ so that the chosen strategy
 of some player $i$ is $s_i = \{t_{\mathcal{W}_k}\}$.
 Choosing strategies for all players corresponds to choosing $w$ sets from $\{\mathcal{W}_1,\ldots,\mathcal{W}_q\}$ and thus
 a strategy profile for $\mathcal{B}$ also describes a solution for $\mathcal{I}$
 where the number of covered elements equals the social welfare of $s$.
 In addition, every solution for $\mathcal{I}$ can be transformed into a strategy profile for $\mathcal{B}$
 by assigning each chosen set $\mathcal{W}_k$ to one player $i$ by setting $s_i = \{t_{\mathcal{W}_k}\}$. %? Das verstehe ich nicht. Ich denke das kann weg. ? and letting $i$ provide the service $\mathcal{W}_k$.
 Again, the social welfare and the number of covered elements are equal.
 Therefore, the problems of finding an optimal solution for $\mathcal{B}$ and finding an optimal solution for $\mathcal{I}$ is equivalent. 
\end{proof}

 If the sets of strategies $\mathcal{S}_i$ correspond to bases of some matroid (with the tasks as elements), the optimal solution for a budget game can be approximated up to a constant factor, since computing an optimal solution corresponds to maximization of a submodular monotone function. A function $g: 2^\mathcal{U} \rightarrow \mathbb{R}$ over a set $\mathcal{U}$ is submodular if $g(X \cup \{u\}) - g(X) \geq g(Y \cup \{u\}) - g(Y)$ for $X \subseteq Y, u \notin Y$ and monotone if $g(A) \leq g(B)$ for all $A \subseteq B$. For budget games, the function mapping the set of tasks chosen by the players to the social welfare has these properties. Nemhauser et al. \cite{Nem} proved that greedy maximization yields an approximation factor of $1 - \frac{1}{e}$. In our case, this means always picking the task (out of all) with the highest utility next. The resulting strategies are then valid, provided the number of tasks in each is not too large. Feige \cite{Feige98} showed that there is no better approximation algorithm for the maximum set coverage problem unless ${\sf P = NP}$. Therefore, we conclude the following result.
\begin{corollary}
 In a matroid budget game, greedy maximization of the social welfare creates a strategy profile $s$ with $\frac{u(opt)}{u(s)} \leq 1 - \frac{1}{e}$. This bound is tight if ${\sf P \neq NP}$.
 \label{lemma:greedyMax}
\end{corollary}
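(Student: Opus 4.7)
The plan is to verify the two structural hypotheses needed to invoke the Nemhauser--Wolsey--Fisher approximation for submodular maximization, and then to derive the tightness claim from the reduction established in the preceding theorem.

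First, I would lift the social welfare from strategy profiles to arbitrary subsets of $\mathcal{T}$ by setting $g(X) := \sum_{r \in \mathcal{R}} \min\bigl(b_r, \sum_{t \in X} t(r)\bigr)$ for $X \subseteq \mathcal{T}$. A short calculation shows that for every strategy profile $s$, the welfare $u(s)$ equals $g(\bigcup_i s_i)$: in both the standard and the ordered variant, the utilities on a single resource $r$ always sum to $\min\bigl(b_r,\sum_{t} t(r)\bigr)$, because all of $b_r$ is distributed when demand exceeds the budget and demands are met exactly otherwise. This reduces the optimization problem to maximizing $g$ over the collection of subsets that arise as unions $\bigcup_i s_i$ for $s \in \mathcal{S}$; since each $\mathcal{S}_i$ is the set of bases of a matroid on $\mathcal{T}_i$, this collection forms the bases of the direct-sum matroid on $\mathcal{T}$.

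Next, I would verify monotonicity and submodularity of $g$. Monotonicity is immediate, since each summand is non-decreasing in $X$. For submodularity, it suffices to observe that for a single resource the map $X \mapsto \min\bigl(b_r, \sum_{t \in X} t(r)\bigr)$ is the minimum of a modular function with a constant, which a short case analysis (split on whether $b_r$ is saturated by the smaller or the larger set) shows to be submodular; then $g$ is submodular as a non-negative sum of submodular functions. Together with the matroid structure, the greedy task-by-task procedure of picking the element with the largest marginal contribution attains ratio $1 - 1/e$ by \cite{Nem}, and by construction the tasks chosen for each player form a basis of $\mathcal{S}_i$, hence a valid strategy.

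For tightness, I would reuse the reduction from the previous theorem: singleton strategies with identical task sets across all players correspond to the uniform matroid of rank $w$ and translate the maximum set coverage instance verbatim into a budget game. Any approximation with ratio better than $1 - 1/e$ would therefore improve on the Feige bound \cite{Feige98}, contradicting ${\sf P} \neq {\sf NP}$. The main subtlety I expect is the careful verification of submodularity at the threshold where $b_r$ becomes saturated; everything else is either a direct consequence of the welfare decomposition or a standard invocation of the two cited results.
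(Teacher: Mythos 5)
Your proposal follows essentially the same route as the paper: identify the social welfare with the submodular, monotone function $g(X)=\sum_{r}\min\bigl(b_r,\sum_{t\in X}t(r)\bigr)$ on sets of tasks, invoke the greedy $1-\nicefrac{1}{e}$ guarantee of Nemhauser et al.\ over the matroid of feasible task sets, and obtain tightness from the maximum-set-coverage reduction together with Feige's inapproximability result. The only difference is that you spell out the welfare decomposition and the per-resource submodularity check, which the paper merely asserts; the argument is correct.
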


\section{Standard Budget Games}
A (standard) budget game does not always possess a NE. In addition, the question whether a given game instance has at least one NE is {\sf NP}-hard. 

\begin{theorem}
 To decide for a budget game $\mathcal{B}$ whether it has a NE is {\sf NP}-complete.
\end{theorem}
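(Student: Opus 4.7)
My plan is to establish NP-completeness in two parts: membership and hardness.

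For membership in \textsf{NP}, a certificate is simply a strategy profile $s \in \mathcal{S}$. Verification amounts to checking, for every player $i \in \mathcal{N}$ and every alternative strategy $s'_i \in \mathcal{S}_i$, that $u_i(s'_i, s_{-i}) \le u_i(s)$. Since the strategy sets and the number of tasks are polynomial in the input, and each utility $u_i(s'_i, s_{-i})$ is computable in polynomial time by evaluating the proportional sharing rule on every resource, the total verification time is polynomial.

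For \textsf{NP}-hardness I would reduce from \textsf{3-SAT}. Given a formula $\varphi$ with variables $x_1,\ldots,x_n$ and clauses $C_1,\ldots,C_m$, I would build a budget game $\mathcal{B}_\varphi$ of two parts. First, a \emph{variable part}: for each variable $x_i$ introduce a player $P_i$ with two singleton strategies, $\{t_i^{\mathrm{T}}\}$ and $\{t_i^{\mathrm{F}}\}$. For each clause $C_j$ introduce a clause resource $r_j$ of budget $1$; the task $t_i^{\mathrm{T}}$ has demand $1$ on $r_j$ if $x_i$ appears positively in $C_j$, and analogously $t_i^{\mathrm{F}}$ if $x_i$ appears negatively. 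Private reward resources attached to each variable player make both of $P_i$'s strategies equally attractive, so that the variable players are indifferent on their own. Second, a \emph{no-equilibrium gadget}: two extra players $A,B$ each with two strategies, designed in the spirit of matching pennies, so that absent any outside influence the pair $(A,B)$ admits no NE (each has a strict improving deviation depending on the other's choice). The coupling to the variable part is done via an auxiliary ``satisfaction resource'' $r^*$ whose budget equals $m$ and which is fully consumed by $A$ and $B$'s tasks only when all clause resources $r_j$ are saturated by variable-task demand, i.e.\ when the induced assignment satisfies $\varphi$. When $r^*$ is ``free'' (all clauses satisfied), one strategy strictly dominates for each of $A$ and $B$, killing the cycle; when at least one clause is unsatisfied, the matching-pennies cycle on $(A,B)$ persists.

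It then needs to be shown that $\mathcal{B}_\varphi$ has a NE if and only if $\varphi$ is satisfiable: a satisfying assignment yields a profile in which the gadget stabilizes and no variable player can improve (because flipping a variable either preserves satisfaction or unsaturates some $r_j$, making the gadget unstable and strictly hurting that variable player via the coupling). Conversely, in any NE the gadget must be stable, which forces every clause resource to be saturated, hence $\varphi$ to be satisfied. The main obstacle will be calibrating the demands and budgets of the gadget so that (i) no variable player ever finds it profitable to deviate just to exploit the gadget, and (ii) the gadget players $A,B$ cannot escape the matching-pennies cycle by any unintended deviation when $\varphi$ is unsatisfied; this requires carefully tuning the private payoffs of $P_i$ to dominate any indirect effect and choosing the budget of $r^*$ so that ``almost satisfied'' profiles still leak enough budget to maintain the cycle.
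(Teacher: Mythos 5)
Your overall architecture---an instance-encoding part, a gadget with no equilibrium, and a coupling that switches the gadget off exactly when the instance is solvable---is the same as the paper's, and your \textsf{NP}-membership argument is the one the paper uses. (The paper reduces from Exact Cover by 3-Sets rather than 3-SAT, uses a three-player cycling gadget $A,B,C$, and inserts an explicit connector player $D$.) However, your hardness reduction has genuine gaps in exactly the place where all the work lies: the coupling.

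First, the mechanism you describe for $r^*$ is not implementable in a budget game. The share of a resource that a task receives depends \emph{only} on the total demand placed on that resource; whether the clause resources $r_j$ are saturated cannot influence what $A$ and $B$ receive from a different resource $r^*$. To transmit ``the assignment satisfies $\varphi$'' to the gadget you need an intermediate player (the paper's player $D$) whose own best response aggregates the state of the variable part through resources it shares with the variable players, and who then occupies or vacates a resource used by the gadget. Second, even with such a detector, its utility is a \emph{sum} over resources of functions of local congestion, so it can only read off aggregate counts; it cannot evaluate the conjunction ``every clause has at least one true literal.'' A profile with one clause triply satisfied and another unsatisfied can produce the same aggregate as a satisfying one. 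This is precisely why the paper reduces from Exact Cover by 3-Sets: there, ``every element covered exactly once'' collapses to a counting condition (exactly $m$ players pick their cover task), and each variable-side player is given a \emph{strict} local incentive to pick the cover task only if it receives the full budget of all its resources. Your variable players are indifferent by construction, so nothing forces clauses to be hit. Finally, your argument that flipping a variable ``makes the gadget unstable and strictly hurts that variable player'' conflates dynamics with the static Nash condition: in a unilateral deviation the gadget players' strategies are held fixed, so the gadget's prospective instability cannot penalize the deviator. Each of these issues is repairable, but repairing them essentially reconstructs the paper's proof.
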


\begin{proof}
 Given $\mathcal{B}$ and a strategy profile $s$, we can verify if $s$ is a NE in polynomial time.
 Therefore, the problem is in {\sf NP}.
 We prove that it is {\sf NP}-hard by reduction from the exact cover by 3-sets problem.
 Given an instance $\mathcal{I} = (\mathcal{U},\mathcal{W})$ consisting of a set $\mathcal{U}$ with $|\mathcal{U}| = 3m$ and 
 a collection of subsets $\mathcal{W} = \mathcal{W}_1,\ldots,\mathcal{W}_q \subseteq \mathcal{U}$ with $|\mathcal{W}_k| = 3$ for every $k$, the question whether $\mathcal{W}$ contains an exact cover for $\mathcal{U}$ in which every element is covered by exactly one subset is {\sf NP}-hard.
 We create a budget game $\mathcal{B} = (\mathcal{N}, \mathcal{R}, (b_r)_{r \in \mathcal{R}}, (\mathcal{S}_i)_{i \in \mathcal{N}}, (u_i)_{i \in \mathcal{N}})$ as follows.
 $\mathcal{N} = \{1,\ldots,q,A,B,C,D\}$ with $\mathcal{T}_i = \{t^i_0,t^i_1\}$ and $\mathcal{S}_i = \{\{t^i_0\},\{t^i_1\}\}$ for every $i = 1,\ldots,q,A,B,C,D$.
 The players $1,\ldots,q$ correspond to the sets in $\mathcal{W}$.
 We introduce the following resources and budgets. The actual values for $\gamma$ and $\delta$ will be defined later on.
 \begin{center}
 \begin{tabular}{r|c|c|c|c|c|c|c|c|c|c|c}
 Resource \  & \ $r_j, j \in \mathcal{U}$ \ & \ $r_{e,i}, i \in [q]$ \ & \ $r_f$ & \ $r_{aux}$ \ & \ $r'_{aux}$ \  & \ $r'_1$ \ & \ $r'_2$ \ & \ $r'_3$ \ & \ $r'_4$ \ & \ $r'_5$ \ & \ $r'_6$ \ \\ \hline
 Budget \  & 1 & \ $\nicefrac{8}{3}$ \ & \ 100 \ & \ 100 \ & \ $\gamma$ & \ 5 \ & \ 10 \ & \ 10 \ & \ 5 \ & \ 10 \ & \ 15 \ \\
 \end{tabular}
 \end{center}
 
 Finally, we list all demands which are not 0. A sketch of the resulting game can be found in Figure \ref{fig:NPhardNE}. 
\begin{center}
 \begin{tabular}{r||c||c||c|c|c}
 Task\,&\,$t^i_0(r_j)\,\forall i\in\{1,\ldots,q\},j\in \mathcal{W}_i$\,&\,$t^i_1(r_{e,i})$\,&\,$t^A_0(r'_1)$\,&\,$t^A_0(r'_2)$\,&\,$t^A_0(r_{aux})$\,\\ \hline
 Demand\,&$1$&\,$997\nicefrac{1}{3}$\,&\,$5$\,&\,$10$\,&\,$33\nicefrac{1}{3}$\,
 \end{tabular}
 \end{center}
 \vspace{-3.5mm}
 \begin{center}
 \begin{tabular}{r||c||c|c|c||c||c|c|c}
 Task\,&\,$t^A_1(r'_3)$\,&\,$t^B_0(r'_3)$\,&\,$t^B_0(r'_4)$\,&\,$t^B_0(r_{aux})$\,&\,$t^B_1(r'_5)$\,&\,$t^C_0(r'_2)$\,&\,$t^C_0(r'_5)$\,&\,$t^C_0(r_{aux})$\,\,\\ \hline
 Demand\,&\,$990$\,&\,$10$\,&\,$5$\,&\,$33\nicefrac{1}{3}$\,&\,$990$\,&\,$990$\,&\,$10$\,&\,$33\nicefrac{1}{3}$\,
 \end{tabular}
 \end{center}
 \vspace{-4mm}
 \begin{center}
 \begin{tabular}{r||c||c|c||c|c}
 Task\,&\,$t^C_1(r'_6)$\,&\,$t^D_0(r_{aux})$\,&\,$t^D_0(r'_{aux})$\,&\,$t^D_1(r_e)$\,&\,$t^D_1(r_f)$\,\\ \hline
 Demand\,&\,$11$\,&\,$\delta$\,&\,$\gamma$\,&\,$\nicefrac{8}{3}\cdot q$\,&\,$100$\,
 \end{tabular}
 \end{center} 
 Basically, our game consists of two smaller ones. The first involves the players $1,\ldots,q$ and is based on $\mathcal{I}$, the other revolves around $A,B$ and $C$ and is mostly constant. Player $D$ forms a connection between the two games. The fact whether $\mathcal{I}$ has a solution determines how the NE in the first game looks like. If $\mathcal{I}$ can be solved, there is a NE which causes $D$ to participate in the first game. This in turn is necessary for the existence of any NE in the second game and therefore for the existence in $\mathcal{B}$ as a whole. We start by analyzing the first game.
 
 Each player $i = 1,\ldots,q$ has the decision between $t^i_0$ and $t^i_1$. Choosing $t^i_0$ corresponds to picking the set $\mathcal{W}_i$ as part of a solution for $\mathcal{I}$. If the three resources connected to $t^i_0$ are not covered by any other task, the utility of player $i$ is 3. Otherwise, it is at most $\frac{5}{2} = 1 + 1 + \frac{1}{2}$. This is already the case when one of the three resources is covered by only one other task. Since the utility of $t^i_1$ is always greater than $\frac{5}{2}$, a player has no incentive to choose $t^i_0$ unless he receives the full budget of the three connected resources. If there is an exact cover in $\mathcal{I}$, it consists of $m$ sets. This leaves $q-m$ players to pick the task $t^i_1$. On the other hand, if $m$ players choose the task $t^i_0$, then they represent an exact cover.
 
 \begin{figure}[!ht]
\centering
  \includegraphics[width=0.5\textwidth]{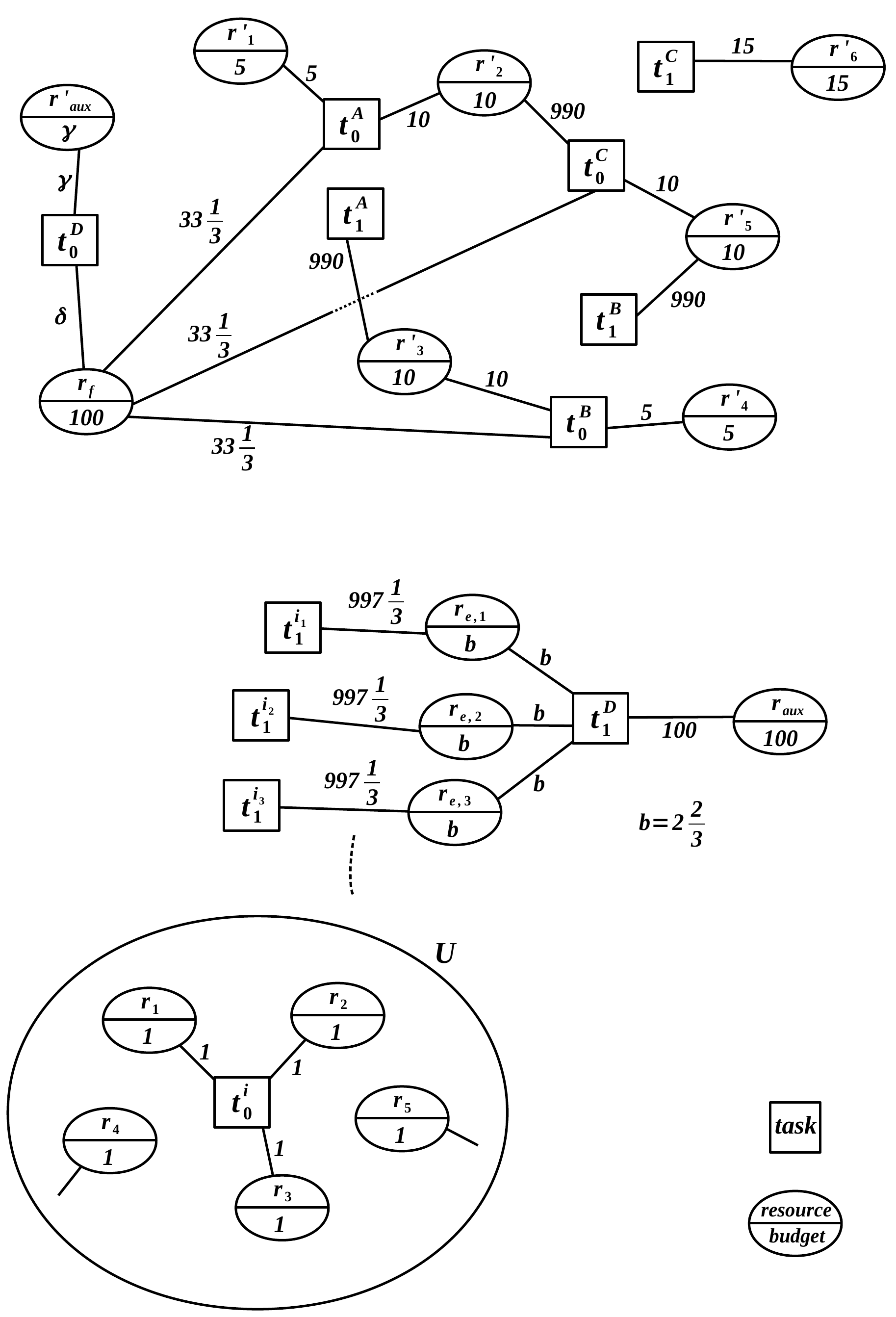}
  \caption{A standard budget game, assembled from two smaller ones, for which it is {\sf NP}-hard to determine whether it has a Nash equilibrium. The upper game has a repeating cycle in which the players $A,B$ and $C$ keep switching between their two strategies as long as $D$ plays $\{t^D_0\}$. He only changes his strategy to $\{t^D_1\}$ if no more than $q-m$ players $i$ of the lower game play $\{t^i_1\}$, which only happens when the remaining $m$ players form an exact cover of the resources in $U$.}
  \label{fig:NPhardNE}  
\end{figure}
 
 For player $D$, the utility of $t^D_1$ is 100 plus his share of the budgets of $r_{e,i}$. For $q-m$ players picking $t^i_1$, this share is $\alpha := (q-m) \frac{b^2}{1000} + mb$.
 %For more players, it only becomes smaller. 
 For one additional player, the share decreases to $\beta := (q-m+1) \frac{b^2}{1000} + (m-1)b$.
 We pick $\gamma$ such that it has the property $\alpha > \gamma > \beta$. Then we choose $\delta > \frac{100(\beta - \gamma + 100)}{\gamma - \beta}$. One can verify that with these values, player $D$ will only pick the task $t^D_1$ if he has to share the resource $r_e$ with $q-m$ other players. As mentioned before, this is equivalent to $\mathcal{I}$ having an exact cover.
 
 Now consider the other half of our game. If player $D$ picks the task $t^D_0$, his demand on the resource $r_f$ is so high that the tasks $t^A_0$, $t^B_0$ and $t^C_0$ hardly profit from $r_f$. Intuitively speaking, for the players $A,B$ and $C$, the resource $r_f$ \textit{almost} does not exist. This prevents any stable state, as there is always one of the three players who can improve its utility by switching its strategy. We conclude that the existence of a Nash equilibrium is equivalent to $\mathcal{I}$ having an exact cover. 
\end{proof}

%  Observe that player 1 will always open one of the tasks $t^1_1$ or $t^1_2$,
%  as they yield the highest utility. He will then try to obtain as much of the budget from the resources in $\mathcal{U}$ as possible
%  by opening tasks from the set $\{t_{\mathcal{W}_1},\ldots,t_{\mathcal{W}_q}\}$.
%  One task is enough to obtain the budget of all resources in its neighborhood.
%  Only when there is no more resource with any budget left in $\mathcal{U}$ will player 1 open the task $t^1_3$.
%  Therefore, the task $t^1_3$ is only opened if at most $w$ tasks from $\{t_{\mathcal{W}_1},\ldots,t_{\mathcal{W}_q}\}$ are chosen,
%  which is equivalent to covering the elements in $\mathcal{U}$ with at most $w$ many sets from $\mathcal{W}$.
%  As long as $t^1_3$ is not open, there is always a player $i$ who can improve its utility by switching between the tasks $t^i_1$ and $t^i_2$.
%  Therefore, the only equilibrium is $s = (s' \cup s_1) \times s_2 \times s_3$ with $s_1 = \{ t^1_2,t^1_3 \}$ or $s_1 = \{ t^1_1,t^1_2,t^1_3 \}$,
%  $s_2 = \{ t^2_2 \}$ and $s_3 = \{ t^3_1 \}$ and $s'$ a solution to $\mathcal{I}$.
%
As a finite strategic game, every budget game has a mixed Nash equilibrium.
It is also a basic utility game and from \cite{Vetta02} we know that the price of anarchy is at most 2 for this class of games. 
If a budget game has a NE, this upper bound applies as well. We can get arbitrarily close to it as shown in the following example.
Let $\mathcal{B}$ be a budget game with $\mathcal{N} = \{1,\ldots,n+1\}$, $\mathcal{T}_i = \{t^i_0,t^i_1\}$ for $i=1,\ldots,n$ and $\mathcal{T}_{n+1} = \{t^{n+1}\}$.
Each player may only choose a single task, i.\,e. $\mathcal{S}_i = \{\{t^i_0\},\{t^i_1\}\}$ and $\mathcal{S}_{n+1} = \{\{t^{n+1}\}\}$. There are two resources $\mathcal{R} = \{r_1,r_2\}$ with $b_1 = b_2 = 1$.
The demands are $t^i_0(r_1) = \frac{1}{n+1}-\varepsilon$, $t^i_1(r_2) = b$, $t^n(r_2) = b$ and 0 else.
The optimal solution is the strategy profile $opt = (t^1_0,\ldots,t^n_0,t^{n+1})$ with a social welfare of $n \cdot (\frac{1}{n+1} - \varepsilon) + 1$. The only NE is $s = (t^1_1,\ldots,t^{n}_1,t^{n+1})$ with a social welfare of 1.

% \begin{theorem}
%  Every standard budget game $\mathcal{B}$ has an $\alpha-$approximate Nash equilibrium.
% \end{theorem}
% 
% \begin{proof}
%  \todo[inline]{Beweis folgt?}
% \end{proof}

%
\section{Ordered Budget Games}
\label{budget:orderedgames}
We now turn to an extension of budget games namely ordered budget games that take into account chronological aspects. %as we assu....
Note that ordered budget games are not strategic games as the utility of a player does not only depend on the strategy profile but also on the {\em order} in which they made their choices. 
For ordered budget games, the social welfare function is a potential function. Since every strategy change by a player (or a coalition of players) does not decrease the utility of the remaining players, it is easy to observe that
every improvement step by a player (or a coalition of players where every player improves his utility) increases social welfare.

\begin{corollary}
The social welfare function is a potential function as every improvement step of a player increases social welfare. 
\end{corollary}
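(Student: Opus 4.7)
The plan is to verify that any unilateral improving deviation by a player strictly increases social welfare, by showing that the deviation weakly increases every other player's utility while strictly increasing the deviator's utility. The same argument extends verbatim to coalitions.

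First, I would fix a state $(s, \prec)$, a player $i$, and an improving deviation $s'_i \in \mathcal{S}_i$ yielding the new state $(s', \prec')$ with $\tau = s'_i \setminus s_i$. I would then pick an arbitrary task $t \in s_j$ of some other player $j \neq i$ and a resource $r$, and compare $u_{t,r}(s,\prec)$ with $u_{t,r}(s',\prec')$. The key quantity is $D_t^r(s,\prec) := \sum_{t' \in \bigcup_k s_k,\; t' \prec_r t} t'(r)$, since $u_{t,r}$ is a non-increasing function of this sum (it equals $t(r)$ as long as $D_t^r + t(r) \le b_r$, and $\max(0, b_r - D_t^r)$ otherwise). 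Because $t \notin \tau$ and the tie-breaking rule places all tasks of $\tau$ at the end of each $\prec'_r$, no task in $\tau$ precedes $t$ in $\prec'_r$; and for any task $t' \in \bigcup_k s'_k \setminus \tau$ we have $t' \prec'_r t$ iff $t' \prec_r t$. Meanwhile, any task in $s_i \setminus s'_i$ that previously contributed to $D_t^r(s,\prec)$ no longer appears at all. Hence $D_t^r(s',\prec') \le D_t^r(s,\prec)$, which gives $u_{t,r}(s',\prec') \ge u_{t,r}(s,\prec)$.

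Summing this inequality over all resources $r \in \mathcal{R}$ and all tasks $t \in s_j$ yields $u_j(s',\prec') \ge u_j(s,\prec)$ for every $j \neq i$. Combining this with the hypothesis $u_i(s',\prec') > u_i(s,\prec)$ that the move is improving, the social welfare satisfies
\begin{equation*}
u(s',\prec') = u_i(s',\prec') + \sum_{j \neq i} u_j(s',\prec') > u_i(s,\prec) + \sum_{j \neq i} u_j(s,\prec) = u(s,\prec),
\end{equation*}
so $u$ strictly increases along any improving step, which is exactly the definition of a (generalized ordinal) potential function. For a coalitional improvement by $C \subseteq \mathcal{N}$, I would simply repeat the argument with $\tau = \bigcup_{i \in C}(s'_i \setminus s_i)$: every $j \notin C$ weakly improves by the same tie-breaking argument, and every $i \in C$ strictly improves by assumption, so $u(s',\prec') > u(s,\prec)$ again.

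The only subtle point, and the one I would take care to state explicitly, is the reliance on the tie-breaking convention that newly chosen tasks are appended to the end of every $\prec'_r$. Without this, a newly introduced task could preempt budget from tasks of non-deviating players, potentially decreasing their utility and destroying monotonicity of $u$; the model's definition is exactly calibrated so that this does not happen, making the step essentially a bookkeeping verification rather than a combinatorial obstacle.
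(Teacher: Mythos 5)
Your proposal is correct and is precisely the argument the paper intends: since deviating tasks are appended to the end of every $\prec'_r$, the demand preceding any task of a non-deviating player can only decrease, so those players weakly improve while the deviator (or every coalition member) strictly improves, forcing social welfare to rise. The paper states this only as a one-sentence observation; your write-up is a faithful, more detailed verification of the same reasoning.
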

Using this insight we derive a simple method to compute a strong equilibrium.
\begin{theorem}
\label{theo:computeEq}
A strong equilibrium can be computed in time $\mathcal{O}(n)$.% by inserting players one after the other
\end{theorem}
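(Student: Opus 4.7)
I propose a one-pass sequential best-response construction: start from the empty profile where no tasks have been placed, then iterate through the players in any fixed order $1, 2, \ldots, n$ and let each player $i$ commit to a best response $s_i$ against the current partial profile. By the update rule for $\prec$, player $i$'s tasks then occupy the block of positions immediately behind the tasks of players $1, \ldots, i-1$ on every resource's ordering. Each player is touched exactly once, which accounts for the $\mathcal{O}(n)$ bound on the number of improvement steps.

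\textbf{Correctness via an earliest-deviator argument.} Let $(s, \prec)$ be the resulting profile. Suppose for contradiction that some coalition $C \subseteq \mathcal{N}$ deviates to $s'_C$ with $u_i(s'_C, s_{-C}, \prec') > u_i(s, \prec)$ for every $i \in C$, where $\prec'$ is the updated ordering. Let $i^* := \min C$. I would split $s'_{i^*}$ into the retained part $s'_{i^*} \cap s_{i^*}$ (whose positions in $\prec$ are unchanged by the deviation) and the new part $s'_{i^*} \setminus s_{i^*}$ (which is pushed to the back of every $\prec_r$). Letting $\widetilde u$ denote the utility $i^*$ would have obtained by playing $s'_{i^*}$ at step $i^*$ of the algorithm (everything before $i^*$ unchanged), the goal is the chain
\[
u_{i^*}(s'_C, s_{-C}, \prec') \;\le\; \widetilde u \;\le\; u_{i^*}(s, \prec),
\]
which contradicts the assumed strict improvement. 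The second inequality is exactly the best-response property at step $i^*$, together with the fact that in an ordered budget game, only tasks earlier in $\prec_r$ affect a given $u_{t,r}$, so players $i^*+1, \ldots, n$ placed after $i^*$ during the original run do not alter $u_{i^*}$.

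\textbf{Key observations for the first inequality.} I would show it task by task. (i) Players $1, \ldots, i^*-1$ are not in $C$, so their tasks and positions are identical in both situations, meaning everything in front of the $i^*$-block is the same. (ii) The retained tasks of $i^*$ sit in the $i^*$-block in both situations; moreover, any task removed or relocated by another coalition member sits behind the $i^*$-block in $\prec$ (since every other member of $C$ has index greater than $i^*$ and was processed later), so none of them precedes $i^*$'s retained tasks. Thus the prefix mass, and hence the contribution, of the retained tasks agrees in the two situations. (iii) The new tasks of $i^*$ sit in the $i^*$-block in the counterfactual but at the very back of $\prec'$ in the deviation, so the mass of demand preceding them is weakly larger in the deviation, whence their contribution is weakly smaller by the definition of $u_{t,r}$.

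\textbf{Main obstacle.} The delicate step is the ordering bookkeeping in observation (iii), combined with the justification of (ii) that no activity of the other coalition members can creep in front of $i^*$'s retained tasks. Both hinge on $i^*$ being the minimum index of $C$ and on the one-directional nature of ordered utilities. Once these are stated cleanly, the proof is essentially immediate; no potential-function or cycle argument is needed, and the bound on the number of steps is built into the algorithm by construction.
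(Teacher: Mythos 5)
Your proposal is correct and follows essentially the same approach as the paper: insert players one after the other, let each best-respond, and observe that the resulting state is stable because the players inserted earlier are unaffected by those inserted later. The paper's own proof is only a two-sentence sketch of this; your earliest-deviator argument (taking $i^* = \min C$ and comparing against the counterfactual insertion-time utility) is a correct and welcome elaboration of the coalition case that the paper leaves implicit.
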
 
\begin{proof}
A (strong) equilibrium can be computed in time $\mathcal{O}(n)$ by inserting players one after the other.
In the resulting state, no player has an incentive to deviate from its strategy as long as the players which have been inserted before him play the strategy they chose when they were inserted.
\end{proof}

Thus, computing both Nash and strong equilibria can be done in polynomial time.
However, if we consider super strong equilibria, the situation is different.
We show that finding such a state is {\sf NP}-hard.
\begin{theorem} 
 Computing a super-strong equilibrium for an ordered budget game with $n$ players is {\sf NP}-hard, even if the number of strategies per player is constant.
\end{theorem}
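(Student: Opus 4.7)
My plan is to reduce from Exact Cover by 3-Sets (X3C), in the spirit of the earlier hardness constructions. The key insight distinguishing this theorem from Theorem~\ref{theo:computeEq} is that, because the social welfare is a potential function, every Pareto-improving coalition move strictly raises welfare, so the welfare-maximum profile is automatically a super-strong equilibrium; existence is therefore free, and the hardness must live in the combinatorial structure shared by all super-strong equilibria rather than in their existence.

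Given an X3C instance $(\mathcal{U},\mathcal{W})$ with $|\mathcal{U}|=3m$, I would build an ordered budget game with one set-player $P_j$ per candidate set $\mathcal{W}_j$, each with a constant number of strategies: an ``include'' strategy whose task has demand $1$ on each element-resource $r_u$ (with $b_{r_u}=1$) for $u\in\mathcal{W}_j$, and an ``exclude'' strategy attaching to an auxiliary shared resource whose budget and demands are tuned so that exclusion strictly dominates inclusion exactly when every element of $\mathcal{W}_j$ is already covered by some other included set. A small constant-size side gadget, analogous to the $A,B,C,D$-gadget in the NE-hardness proof, would pin the global welfare maximum to a value attainable if and only if the instance admits an exact cover.

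The analysis then proceeds in two directions. Any exact cover yields a welfare-maximum profile, hence super-strong. Conversely, if a profile does not encode an exact cover, some element is either uncovered or over-covered. In the uncovered case a singleton set-player containing the missing element can Pareto-improve by switching to include. In the over-covered case a two-player coalition --- one redundantly-included player switching to exclude, and one currently-excluded player whose auxiliary utility is zero switching to include --- can be shown to weakly gain throughout, with a strict gain somewhere, provided the auxiliary resource and its ordering are calibrated correctly. Thus computing any super-strong equilibrium would solve X3C.

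The main obstacle will be establishing the over-covered case. Since deviating players' tasks are pushed to the end of every affected resource's order, the switcher-to-include lands behind the existing tasks on its element-resources, so the strict Pareto gain must be engineered either through the auxiliary resource (where the switcher-to-exclude vacates a consumed slot that the other coalition member can now occupy) or through a newly-freed element. Choosing budgets, demands, and orderings so that this swap is always available outside exact-cover profiles, keeping the number of strategies per player constant, and ensuring robustness against more exotic reordering coalitions as well as the tie-breaking rules discussed in Section~\ref{budget:orderedgames}, is where I expect most of the technical work to lie.
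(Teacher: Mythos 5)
Your high-level framing is right -- the welfare-maximizing profile of an ordered budget game is always a super-strong equilibrium (any coalition move in which all members weakly gain and one strictly gains strictly increases the potential, i.e.\ social welfare), so the hardness must be located in the combinatorial structure of the equilibria rather than in existence. But your reduction is left open exactly where the difficulty sits. The over-covered case is not a routine calibration: when your redundantly-included player switches to ``exclude'' it frees the three element-resources of \emph{its own} set, while the coalition partner who switches to ``include'' lands at the end of the order on the element-resources of \emph{its} set, which need not intersect the freed ones. For a profile in which, say, every set-player includes and every element is covered, it is not clear that any such two-player swap is weakly improving for both members, and you would also have to rule out that no larger coalition helps either -- i.e.\ you would have to show the profile genuinely \emph{is not} a super-strong equilibrium, for every non-exact-cover profile. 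As written, the budgets, demands and orderings that are supposed to make this work are unspecified, and the side gadget's role is unclear (the $A,B,C,D$ gadget in the paper serves NE \emph{non-existence} in standard budget games, which is a different phenomenon).

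The paper sidesteps all of this by reducing from monotone One-In-Three 3SAT instead. Each variable is a player with strategies $0_i$ and $1_i$; each clause $c_j$ contributes a resource $r_{j,0}$ of budget $2$ demanded by the $0$-tasks of its three variables and a resource $r_{j,1}$ of budget $1$ demanded by their $1$-tasks. A one-in-three satisfying assignment puts exactly one $1$-task on each $r_{j,1}$ and exactly two $0$-tasks on each $r_{j,0}$, so no budget is exceeded and \emph{every} player simultaneously attains its individual maximum utility $k_i$ (the number of clauses it appears in). Consequently, from any other profile the \emph{grand} coalition can move to this profile: everyone weakly improves to its personal maximum and at least one player strictly improves because the welfare was lower. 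Hence that profile is the unique super-strong equilibrium when it exists, and computing any super-strong equilibrium decides satisfiability. This choice of source problem is the missing idea in your proposal: it makes the coalition argument one line and eliminates the delicate small-coalition and ordering analysis you correctly identify as the obstacle. If you want to keep X3C, you would need to redesign the gadget so that the exact-cover profile gives every player its individual maximum, which is essentially re-deriving the one-in-three structure.
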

\begin{proof}
 We prove the theorem via a reduction from the monotone One-In-Three 3SAT problem. Given is a set $U = \{x_1,\ldots,x_n\}$ of variables and a collection $C$ of clauses over $U$ with $|c| = 3$ for each $c \in C$. In this context, monotone implies that no $c$ contains a negated literal. We therefore call the literals just variables.

%  we first transform the set $C$. Let $\alpha, \beta,\gamma$ be literals over the variables from $U$. For every $c \in C$, $c = \alpha \vee \beta \vee \gamma$, we also add $c' = \neg\alpha \vee \neg\beta \vee \neg\gamma$ to $C$. The resulting set of clauses is denoted by $C'$. $C$ and $C'$ are equivalent. If there is a valid satisfying truth assignment for $C$, then every clause $c$ is not only satisfied but also contains at least one false literal. Therefore, $c'$ is also satisfied.
 
 We construct an ordered budget game $ \mathcal{B} = (\mathcal{N}, \mathcal{R}, (b_r)_{r \in \mathcal{R}}, (\mathcal{S}_i)_{i \in \mathcal{N}}, (u_i)_{i \in \mathcal{N}})$ from the sets $U$ and $C$. Every variable $x_i \in U$ defines a player $i \in \mathcal{N}$ with $\mathcal{T}_i = \{ 0_i,1_i \}$. Every clause $c_j \in C$ defines two resources $r_{j,0},r_{j,1} \in \mathcal{R}$ with $b_{j,0} = 2$ and $b_{j,1} = 1$. $\mathcal{S}_i = \mathcal{T}_i$ for every player $i$. The demands are defined as
 $$ 0_i(r_{j,0}) = \left\{ \begin{array}{cl} 1, & \mbox{  if } x_i \in c_j  \\ 0, & \mbox{  else} \end{array} \right. \hspace{1cm} 1_i(r_{j,1}) = \left\{ \begin{array}{cl} 1, & \mbox{  if } x_i \in c_j  \\ 0, & \mbox{  else} \end{array} \right. $$  
 Set the remaining demands $0_i(r_{j,1})$ and $1_i(r_{j,0})$ to 0. Let $k_i$ be the number of clauses the variable $x_i$ occurs in. Then each task of $i$ has a demand of 1 on $k_i$ many resources and a demand of 0 on all others. The highest utility the player $i$ can obtain is also $k_i$. If there is a satisfying truth assignment $\phi$ for $C$, then each player can obtain this individual maximum. If $\phi(x_i) = 0$, let player $i$ choose strategy $0_i$, otherwise $1_i$. $\phi$ has to one-in-three property, which means that in each clause, only one variable is set to 1. Thus, every resource $r_{j,1}$ is covered by exactly one task $1_i$ and every resource $r_{j,0}$ by exactly two tasks $0_{i_1}$ and $0_{i_2}$. No resource experiences a demand higher than its budget, therefore the order of the tasks is not important here. In this case, the social welfare achieves a value of $\sum_{i \in \mathcal{N}} k_i$. If there exists a strategy profile with this social welfare in $\mathcal{B}$, then it induces in turn a satisfying truth assignment $\phi$ for $C$. Note that if such a strategy profile exists, it is also the only super-strong equilibrium of the game. In each other state, all players can form a coalition to collectively assume this strategy profile without reducing their utility. Therefore, computing a super-strong equilibrium for $\mathcal{B}$ determines whether $C$ can be satisfied or not.
\end{proof}

%
%\begin{figure}[htbp]
% \centering
% \includegraphics[width=.7\textwidth]{poa.pdf}
% \caption{Example of how a facility is split into several sub-facilities, one for each resource (here: $m=3$). Each sub-facility has at most one positive impact on any resource (impacts set to 0 by construction omitted here).}
% \label{fig:poa}
%\end{figure}
%\noindent

Since the optimal solution of an ordered budget game is a NE and even a super-strong equilibrium, we obtain the following bound on the price of (super strong) stability.
\begin{corollary}
The price of (super strong) stability of ordered budget games is $1$.
\end{corollary}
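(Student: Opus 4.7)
The plan is to show that any optimal strategy profile $opt$, equipped with an arbitrary initial ordering $\prec$, is itself a super strong equilibrium; since every super strong equilibrium is in particular a Nash equilibrium, this yields price of stability $1$ in both variants.

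First I would recall two facts already established in the excerpt: (i) the social welfare of a profile is order-independent, so the quantity $u(opt)$ is well-defined purely from the task selection, and (ii) when any player or coalition of players changes its strategy, the tasks of the non-deviating players keep their positions in every $\prec_r$, while the deviators' new tasks are appended at the end. Consequently, for each resource $r$, the prefix of $\prec_r$ consisting of non-deviators is unchanged in length and content, and any demand of a deviator that previously preceded a non-deviating task is removed. This immediately implies that no non-deviating player sees any of its tasks served later in the ordering than before, so its utility cannot decrease -- this is the coalition-version of the observation used right before the price-of-stability corollary.

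Next I would argue by contradiction. Suppose $(opt,\prec)$ is not a super strong equilibrium. Then there exists a coalition $C \subseteq \mathcal{N}$ and a joint deviation $s'_C$ leading to the new state $((s'_C, opt_{-C}), \prec')$ with $u_i(s'_C, opt_{-C}, \prec') \ge u_i(opt, \prec)$ for every $i \in C$, strictly for at least one $i^{\star}\in C$. Combining this with step one, every player outside $C$ also enjoys weakly higher utility after the deviation. Summing over all players therefore yields
\[
u(s'_C, opt_{-C}) \;>\; u(opt),
\]
and by the order-independence of social welfare the left-hand side equals the welfare of the task selection $(s'_C, opt_{-C})$. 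This contradicts the optimality of $opt$.

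The only subtle point -- and the one I would take care to spell out -- is the coalition version of the monotonicity statement for non-deviators, since the excerpt phrases it for single-player deviations. The key observation is that the update rule of $\prec_r$ treats the coalition's added tasks as a single block appended at the end, so the argument for a single deviator carries over verbatim: no non-deviator's task can be demoted in any ordering, so each $u_{t,r}$ for $t$ belonging to a non-deviator can only weakly increase. Once this is in place, the contradiction with the optimality of $opt$ is immediate, and since the profile $opt$ is a super strong equilibrium it is a fortiori a strong and a Nash equilibrium, giving $\mathrm{PoS} = 1$ in all three senses.
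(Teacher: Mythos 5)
Your proposal is correct and follows essentially the same route as the paper: the paper's one-line justification ("the optimal solution is a NE and even a super-strong equilibrium") rests on exactly the observation you spell out, namely that a coalition deviation cannot decrease any non-deviator's utility, so a weakly-improving deviation with one strict improvement would strictly increase the order-independent social welfare and contradict optimality. Your explicit treatment of the coalition case merely fills in detail the paper leaves to its preceding corollary on the potential function.
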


For the price of anarchy we obtain the following, nearly tight bound.

\begin{theorem}
 For every ordered budget game, the price of anarchy is at most $2$. For every $\varepsilon > 0$, there exists an ordered budget game with PoA $= 2 - \varepsilon$.  
 \label{lemma:PoAordered}
\end{theorem}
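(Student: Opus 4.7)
The plan is to prove the two halves independently. Define $f_r(X) := \min(b_r, \sum_{t \in X} t(r))$ for a set of tasks $X$; under the natural interpretation of the ordered utility rule, the total utility extracted from resource $r$ by a profile $\sigma$ (with task set $T_\sigma$) equals $f_r(T_\sigma)$, independently of the ordering, so $u(\sigma) = \sum_r f_r(T_\sigma)$. Both bounds will be proved relative to this identity.

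For the upper bound, fix a Nash equilibrium $s$ with ordering $\prec$ and an optimum $opt$. The central step is to lower-bound the utility player $i$ obtains upon deviating from $s$ to $(opt_i, s_{-i})$. In the updated ordering $\prec^i$ the new tasks $opt_i \setminus s_i$ are appended to the end of each $\prec_r$, the tasks in $s_i \cap opt_i$ keep their positions, and the tasks in $s_i \setminus opt_i$ disappear. The crucial observation is that in $\prec^i$ each $s_{-i}$-task sees at least as much load preceding it on every resource $r$ as it would see alone, because the $s_i \cap opt_i$ tasks still sit in their original positions; hence the $s_{-i}$-tasks collectively extract at most $f_r(T_{s_{-i}})$ from $r$ in the new state. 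Since the total utility from $r$ is $f_r(T_{opt_i} \cup T_{s_{-i}})$, player $i$'s share is at least $f_r(T_{opt_i} \cup T_{s_{-i}}) - f_r(T_{s_{-i}})$. Summing the Nash inequality over $i$ gives
\[
u(s) \;\geq\; \sum_r \sum_i \bigl[f_r(T_{opt_i} \cup T_{s_{-i}}) - f_r(T_{s_{-i}})\bigr].
\]
It then remains to prove the per-resource inequality $\sum_i \bigl[f_r(T_{opt_i} \cup T_{s_{-i}}) - f_r(T_{s_{-i}})\bigr] \geq f_r(T_{opt}) - f_r(T_s)$, which I would do by writing each marginal explicitly as $\min(\mathrm{load}_r(opt_i), \max(0, b_r - \mathrm{load}_r(s_{-i})))$, using $\mathrm{load}_r(s_{-i}) \leq \mathrm{load}_r(s)$, and invoking the elementary inequality $\sum_i \min(a_i, c) \geq \min(\sum_i a_i, c)$ for nonnegative $a_i, c$. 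Summing over $r$ yields $u(s) \geq u(opt) - u(s)$, i.e.\ $u(opt) \leq 2\, u(s)$.

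For the lower bound, I would exhibit the following family parameterized by $\varepsilon \in (0,1)$: two resources $r_1, r_2$ of budget $1$ each and two players, where player $1$ has tasks $t_a$ (demand $1$ on $r_1$, $0$ on $r_2$) and $t_b$ (demand $0$ on $r_1$, $1 - \varepsilon$ on $r_2$), and player $2$ has a single task $t^2$ of demand $1$ on $r_1$. In a state where $t_a$ is ordered before $t^2$ on $r_1$, player $1$ obtains utility $1$ and player $2$ obtains $0$; player $1$'s only alternative $t_b$ is appended to the end of $\prec_{r_2}$ (empty before the move) and yields utility $1 - \varepsilon < 1$, so this is a Nash equilibrium of social welfare $1$. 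The profile $(t_b, t^2)$ is an optimum of social welfare $2 - \varepsilon$, hence the PoA of this game equals $2 - \varepsilon$.

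The main obstacle is the deviation-utility lower bound used in the upper-bound argument: one has to verify carefully that, despite the intricate reshuffling of $\prec$ when player $i$ moves, the unchanged $s_{-i}$-tasks never collectively extract more than their solo utility $f_r(T_{s_{-i}})$. This uses both the precise placement of $s_i \cap opt_i$ tasks (whose positions remain) and of $opt_i \setminus s_i$ tasks (which are appended to the end), and it is the one point in the proof where the argument genuinely relies on the ordered structure rather than on generic basic-utility-game reasoning.
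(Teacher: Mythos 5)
Your proposal is correct and takes essentially the same route as the paper: the upper bound rests on the same key fact that a deviating player's newly appended tasks collect at least the residual budget left by the $s_{-i}$-tasks (so the deviation payoff is at least the marginal contribution of $opt_i$ over $T_{s_{-i}}$), combined with the Nash inequality and a per-resource aggregation that you phrase via submodularity of $f_r$ where the paper instead partitions resources into those where the $\min$ is attained by the residual term and the rest. Your lower-bound instance is the same two-player, two-resource construction the paper uses (modulo an apparent typo there in which resource $t^2$ demands), so both halves check out.
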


\begin{proof}
 We begin by upper bounding the price of anarchy of an ordered budget game $\mathcal{B}$. Let $(s,\prec)$ be a NE of $\mathcal{B}$ and $s^*$ be the strategy profile with the maximal social welfare. Note that the ordering of the players is irrelevant for the social welfare.
To simplify notation we will use $s$ and $(s_{-i},s^*_i)$ as a shorthand for $(s,\prec)$ and $((s_{-i},s^*_i),\prec')$ with $\prec'$ the new ordering as defined in Section~\ref{model}.
We can lower bound the social welfare of a NE $s$ as follows.
 \begin{align}
  \sum_{i \in \mathcal{N}} u_i(s) =& \sum_{r \in \mathcal{R}} \sum_{i \in \mathcal{N}} \sum_{t \in s_i} u_{t,r}(s)  \notag \\
                   \ge& \sum_{r \in \mathcal{R}} \sum_{i \in \mathcal{N}} \sum_{t \in s^*_i} u_{t,r}(s_{-i},s^*_i) \label{eq:Nash} \\
                   \ge& \sum_{r \in \mathcal{R}} \sum_{i \in \mathcal{N}} \sum_{t \in s^*_i} \min\left(t(r), b_r-\sum_{i' \ne i} \sum_{t' \in s_{i'}} u_{t',r}(s)\right) \label{eq:utilf} \\
                   \ge& \sum_{r \in \mathcal{R}} \sum_{i \in \mathcal{N}} \sum_{t \in s^*_i} \min\left(u_{t,r}(s^*), b_r-\sum_{i' \ne i} \sum_{t' \in s_{i'}} u_{t',r}(s)\right) \notag \\
                   \ge& \sum_{r \in \mathcal{R}_1} \sum_{i \in \mathcal{N}} \sum_{t \in s^*_i} \min\left(u_{t,r}(s^*), b_r-\sum_{i' \ne i} \sum_{t' \in s_{i'}} u_{t',r}(s)\right) + \sum_{r \in \mathcal{R}_2} \sum_{i \in \mathcal{N}} \sum_{t \in s^*_i} u_{t,r}(s^*) \label{eq:split}\\                    
                  \ge& \sum_{r \in \mathcal{R}_1} \left(b_r-\sum_{i' \in \mathcal{N}} \sum_{t' \in s_{i'}} u_{t',r}(s)\right)+ \sum_{r \in \mathcal{R}_2} \sum_{i \in \mathcal{N}} \sum_{t \in s^*_i} u_{t,r}(s^*) \notag \\
                  \ge& \sum_{r \in \mathcal{R}_1} \left( \sum_{i \in N} \sum_{t\in s^*_i} u_{t,r}(s^*) -\sum_{i \in \mathcal{N}} \sum_{t' \in s_{i}} u_{t',r}(s)\right)+ \sum_{r \in \mathcal{R}_2} \sum_{i \in \mathcal{N}} \sum_{t \in s^*_i} u_{t,r}(s^*) \notag \\
 \ge& \sum_{r \in \mathcal{R}}  \sum_{i \in N} \sum_{t\in s^*_i} u_{t,r}(s^*) -\sum_{r\in\mathcal{R}_1}\sum_{i \in \mathcal{N}} \sum_{t' \in s_{i}} u_{t',r}(s) \notag \\
                  \ge&  \sum _{i \in \mathcal{N}} u_i(s^*) - \sum_{r \in \mathcal{R}} \sum_{i \in \mathcal{N}} \sum_{t' \in s_{i}} u_{t',r}(s) \notag \\
                  \ge& \sum _{i \in \mathcal{N}} u_i(s^*) -  \sum_{i \in \mathcal{N}} u_i(s)  \label{eq:end} 
 \end{align}
 
 Observe that (\ref{eq:Nash}) follows from the Nash inequality and (\ref{eq:utilf}) from the definition of the utility functions. In (\ref{eq:split}) we partition $\mathcal{R}$ into $\mathcal{R}_1$ and $\mathcal{R}_2$ where $\mathcal{R}_1$ contains all resources with at least one task that evaluates the min statement to the second expression. That is there is a $i \in \mathcal{N}$ and a $t \in s^*_i$ with $u_{t,r}(s^*) > b_r-\sum_{i' \ne i} \sum_{t' \in s_{i'}} u_{t',r}(s)$. Adding $\sum_{i \in \mathcal{N}} u_i(s)$ to both sides at (\ref{eq:end}) shows that the price of anarchy is bounded by $2$. 
 
 For a lower bound, consider the game $\mathcal{B} = (\mathcal{N}, \mathcal{R}, (b_r)_{r \in \mathcal{R}}, (\mathcal{S}_i)_{i \in \mathcal{N}}, (u_i)_{i \in \mathcal{N}})$ with $\mathcal{N} = \{1,2\}$ with $\mathcal{T}_1 = \{t^1_1,t^1_2\}$ and $\mathcal{T}_2 = \{t^2\}$, $\mathcal{R} = \{r_1,r_2\}$ with $b_1 = b$ and $b_2 = b(1-\varepsilon)$. Set the demands to $t^1_1(r_1) = b$, $t^1_2(r_2) = b(1-\varepsilon)$, $t^2(r_2) = b$ and all others to 0. Set $\mathcal{S}_1 = \{ \{ t^1_1, \}, \{ t^1_2 \} \}$ and $\mathcal{S}_1 = \{ \{ t^2 \} \}$. In the optimal solution, the social welfare is $u_1(opt) + u_2(opt) = b - b \cdot \varepsilon + b = 2b - b \cdot \varepsilon$. If player 1 is inserted first, his best response is to open task $t^1_1$. This leads to a NE in which the utility of player 2 is 0 and the price of anarchy $2-\varepsilon$. $\mathcal{B}$ can be extended to $n$ players by using $\frac{n}{2}$ instances of the two-player version.
\end{proof}

In contrast to the fact that one can easily construct an equilibrium in $n$ steps by inserting players one after the other,
the situation is different when starting in an arbitrary situation.
We now study the dynamic that emerges if players repeatedly perform strategy changes that improve their utilities.
This may also lead to situations in which a resource is simultaneously newly allocated by two tasks of different players which necessitates the existence of a tie-breaking rule.
 We introduce two tie-breaking rules which guarantee that the game still converges towards an equilibrium.
 For an ordered budget game $\mathcal{B}$, let $p: \mathcal{N} \rightarrow \mathbb{N}$ be an injective function that assigns a unique priority to every player $i$.
 Whenever simultaneous strategy changes occur, they are executed sequentially, in decreasing order of the priorities of the players involved.
 This corresponds to setting $t_1 \prec_r t_2$ for all resources $r$ and all pairs of tasks where the priority of the player with $t_1$ was higher than the priority of the player with $t_2$.
 For $p_{\text{fix}}$, the priorities are fixed.
 For $p_{\text{max}}$, they change over time, with $p_{\text{max}}(i_0) > p_{\text{max}}(i_1)$ if $u_{i_0}(s) > u_{i_1}(s)$ for the current strategy profile $s$.
 Any ties may be broken arbitrarily.
\begin{theorem}
 Let $\mathcal{B}$ be an ordered budget game which allows multiple simultaneous strategy changes. If $\mathcal{B}$ uses either $p_{\text{fix}}$ or $p_{\text{max}}$ to set the priorities of the players, then it reaches a NE after finitely many improvement steps.
\end{theorem}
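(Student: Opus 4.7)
The plan is to use the social welfare $u$ as a potential, leveraging the Corollary that every single-player improving move strictly raises $u$. Since $u(s) \le \sum_{r\in\mathcal{R}} b_r$ and the strategy space is finite, it suffices to show that each batch of simultaneous improving moves produces a strict increase of $u$; termination at a state with no improving single-player deviation (i.e., a NE) then follows automatically.

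For the batch analysis, I first observe the identity $u(s) = \sum_{r\in\mathcal{R}}\min\bigl(b_r,\sum_{i\in\mathcal{N}}\sum_{t\in s_i} t(r)\bigr)$, which shows that $u$ depends only on the multiset of chosen tasks and is a monotone, submodular set function in that multiset. Given a batch of players $i_1,\ldots,i_k$ ordered by decreasing priority and executed sequentially, I would argue as follows. The first player $i_1$ faces the unchanged state, so their pre-committed deviation is an individual improving step and $u$ strictly increases at this first sub-step. For later movers $i_j$, the tie-breaking rule appends their new tasks to each $\prec_r$ after all earlier-priority new tasks from the batch, so higher-priority additions never move ahead of $i_j$'s pre-existing tasks; moreover, the new tasks of $i_j$ sit at the very end of the updated $\prec_r$ and harvest only the remaining budget. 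Using submodularity of $\min(b_r,\cdot)$ together with these ordering invariants, I would show that even if a later sub-step is no longer strictly improving for its mover, the cumulative change over the batch is dominated below by $i_1$'s solo improvement, hence strictly positive.

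For $p_{\text{fix}}$, the argument above applies directly. For $p_{\text{max}}$, I would also verify a cleaner, independent lexicographic potential on the decreasingly-sorted utility vector $(u_{(1)},\ldots,u_{(n)})$: the currently largest mover acts first and strictly improves, while non-movers' utilities are non-decreasing and all batch-new tasks of other players are appended behind the top mover's new tasks in $\prec_r$, so the top mover's utility in the final batch state still exceeds its pre-batch value. This yields a strict lexicographic increase of the sorted utility vector, and since only finitely many such vectors are possible the dynamic terminates.

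The main obstacle will be the bookkeeping inside a single batch: once higher-priority players have already acted, a lower-priority player's pre-committed deviation may cease to be strictly improving, and the delicate step is to prove that this cannot pull $u$ back below its pre-batch value. The key leverage points are the ordering-independence of $u$ (so one only needs to track which tasks are selected, not their positions in $\prec$) and the invariant that batch-new tasks are only ever appended to $\prec_r$ in priority order and never inserted ahead of pre-existing tasks.
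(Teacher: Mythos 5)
Your primary plan---showing that social welfare strictly increases over each batch of simultaneous moves---does not work, and the ``delicate step'' you flag at the end is in fact false: a batch of individually improving moves can strictly \emph{decrease} $u$. Concretely, take two movers with $p_{\text{fix}}(1)>p_{\text{fix}}(2)$ and an empty resource $r_c$ with $b_{r_c}=11$. Player $1$ currently fully occupies a private resource of budget $10.5$ and player $2$ one of budget $10$; each has an alternative task demanding $11$ on $r_c$, so each deviation is improving in isolation ($11>10.5$ and $11>10$). Executed simultaneously, player $1$ is served first and takes all of $r_c$, player $2$ gets $0$ there and has abandoned his old resource, so $u$ drops from $20.5$ to $11$. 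The correct lower bound for a single sub-step of the batch is that the welfare change is at least the \emph{mover's own} utility change (the budget he frees is at most what he was receiving, and what he gains at the end of the orderings is exactly new welfare); summed over the batch this gives $\Delta u \ge \sum_{j}\Delta u_{i_j}$, and the later movers' losses are in no way dominated by $i_1$'s solo gain. Submodularity of $\min(b_r,\cdot)$ cannot rescue this, because the quantity that collapses is a departing player's utility, not an overlap term.

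The device that does work is the one you relegate to a side remark for $p_{\text{max}}$, and it is essentially the paper's proof: track the utility vector sorted by \emph{priority} (the fixed order for $p_{\text{fix}}$, the current-utility order for $p_{\text{max}}$) and show it increases lexicographically. The highest-priority mover $i$ in the batch is served first among all newly placed tasks, hence realizes exactly the gain he anticipated; every player of higher priority than $i$ is a non-mover, and non-movers' utilities never decrease under a strategy change of others; for $p_{\text{max}}$ one additionally checks that if $i$'s position in the sorted vector changes, it is taken over by a player with utility exceeding $u_i(s')>u_i(s)$. Finiteness of the set of reachable states (finitely many profiles and orderings) then yields termination. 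In the example above this is consistent: the priority-sorted vector goes from $(10.5,10)$ to $(11,0)$, a lexicographic increase, even though $u$ falls. So you should discard the welfare potential for the batched dynamic entirely and run the lexicographic argument for \emph{both} tie-breaking rules, not only for $p_{\text{max}}$.
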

\begin{proof}
 Let $s$ be the current strategy profile $\mathcal{B}$ and $\overrightarrow{u}(s) \in \mathbb{R}^n_{\geq 0}$ the vector containing the current utilities of all players.
 We call $\overrightarrow{u}(s)$ the utility vector of $\mathcal{B}$ under $s$.
 We always sort $\overrightarrow{u}(s)$ in decreasing order of the player priorities, i.\,e. the player at position $i$ has a higher priority than the player at position $i+1$.
 For $p_{\text{max}}$, this order may change over time. Let $N \subseteq \mathcal{N}$ be the set of players who are simultaneously performing a strategy change.
 Each player would improve his utility if he were the only player in $N$. Let $s'$ be the resulting strategy profile.
 Note that $\overrightarrow{u}(s) <_{\text{lex}} \overrightarrow{u}(s')$ for both priority functions, where $<_{\text{lex}}$ is the lexicographical order.
 Let $i \in N$ be the player with the highest priority among those in $N$. For $p_{\text{fix}}$, $i$ receives exactly the utility increase he expected from the strategy change.
 From all the players in $N$, he is also the one with the smallest index in both $\overrightarrow{u}(s)$ and $\overrightarrow{u}(s')$.
 This alone warrants that $\overrightarrow{u}(s) <_{\text{lex}} \overrightarrow{u}(s')$.
 For $p_{\text{max}}$, the same argumentation holds if the position of $i$ in the utility vectors does not change.
 Otherwise, his index in $\overrightarrow{u}(s)$ is now occupied by a player $i'$ with $u_i(s) < u_i(s') < u_{i'}(s')$.
 Again, we have $\overrightarrow{u}(s) <_{\text{lex}} \overrightarrow{u}(s')$.
 Since the utility vectors are strongly monotonely increasing, but bounded by the vectors containing the maximal utility of each player, a NE is reached after finitely many steps.
\end{proof}

For the following, we assume that $p_\text{fix}$ is used as tie-breaking rule and that the priority of a player corresponds to her index.
We show that the number of improvement steps towards an equilibrium may be exponential in the number of players, even if the number of strategies per player is constant.

\begin{theorem}
 For any $n$, there is an ordered budget game $\mathcal{B}_n$ with polynomial description length in $n$ and a strategy profile $s_0$ so that the number of best-response improvement steps from $s_0$ to any NE $s$ of $\mathcal{B}_n$ is exponential in $n$.
\end{theorem}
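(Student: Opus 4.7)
The plan is to construct $\mathcal{B}_n$ so that, starting from a specific initial state $s_0$, every best-response sequence traces a unique chain of $2^n - 1$ single-player moves through the state space before reaching an equilibrium. The overall template is a simulation of an $n$-bit counter: each player $i \in \{1, \ldots, n\}$ represents a bit, with two main strategies corresponding to bit values $0$ and $1$ (plus possibly a constant number of auxiliary strategies that stabilize the gadgets). Budgets and demands are taken from a geometric hierarchy (e.g., budgets of order $2^i$ at level $i$), which keeps the description length polynomial via binary encoding despite the exponentially large numerical values.

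Best-response dynamics is forced to mimic a reflected Gray-code traversal of the $2^n$ bit-configurations, because consecutive states in the Gray code differ in exactly one bit and therefore correspond precisely to single-player deviations. The essential mechanism is the order-dependence built into ordered budget games: when a player switches strategy, its newly-activated tasks are placed at the \emph{end} of the ordering on each affected resource, which shifts utilities on those resources in a controlled way. I would design gadgets so that flipping bit $i$ reshuffles the orderings on a small collection of shared resources in such a way that exactly one other bit-player $j$ is now strictly better off switching its own strategy, and this $j$ is exactly the player prescribed by the next Gray-code step.

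Concretely, for each pair of bit-levels I introduce a constant-size gadget of resources whose budgets and demands are calibrated using the geometric hierarchy, so that the utility contributions from different levels cannot interfere. A careful case analysis on each local neighborhood of a Gray-code state then verifies two things. First, the prescribed bit-flip is strictly improving for the designated player and is their unique best response. Second, no other unilateral deviation is improving at all, so that there are no ``shortcut'' moves that let the dynamics leap ahead or terminate early. Since each level uses only constantly many strategies, tasks, and resources, the description length of $\mathcal{B}_n$ is polynomial in $n$.

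The main obstacle is precisely this verification: the hierarchy of budgets must be loose enough to accommodate the reshufflings dictated by Gray-code transitions, yet strict enough to rule out any interaction between distant levels and any alternative improving deviation at every intermediate state. Once this is established, together with the observation that the terminal configuration $g_{2^n-1}$ is the unique equilibrium reachable from $s_0 = g_0$, any best-response run starting at $s_0$ must traverse the full chain, giving $2^n - 1 = 2^{\Omega(n)}$ improvement steps. The strict monotonicity of the social-welfare potential along this chain is consistent with the Gray-code path visiting each of the $2^n$ distinct configurations exactly once, so the construction does not conflict with the convergence guarantee already established for these games.
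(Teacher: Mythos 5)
Your high-level plan---simulate an exponential counter with $n$ two-strategy players and use the append-to-the-end ordering rule to control when budget becomes available---is the same idea the paper uses, though the paper organizes it differently: a recursive doubling construction in which $\mathcal{B}_n$ contains $\mathcal{B}_{n-1}$, a new player $n$ whose switch \emph{resets} $\mathcal{B}_{n-1}$ to its initial profile, and an auxiliary player whose switch \emph{restarts} it, giving $2n-1$ players and at least $2^n-1$ steps (a binary counter with an explicit carry gadget, rather than your Gray-code traversal). However, your proposal has a genuine gap: the gadgets are never constructed. You state that you ``would design gadgets'' so that each Gray-code flip is the unique improving deviation, and you yourself identify the verification of this as ``the main obstacle.'' That verification is not a routine check to be deferred---it is the entire technical content of the theorem, and without explicit budgets and demands there is no proof.

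The difficulty you are deferring is also sharper than your sketch acknowledges, because of a structural property of ordered budget games: when any player moves, its new tasks are appended to the end of each resource's order, so no other player's utility ever decreases. Consequently each bit-player's utility is monotone non-decreasing along the whole run and must \emph{strictly} increase at every one of its own moves. The least significant bit player flips $\Theta(2^{n-1})$ times while alternating between the same two strategies, so the construction must arrange for strictly more budget to be released to each of its two tasks before every single one of its flips, cascading up through all levels. This is precisely what the paper's reset/restart resources (with budgets $p^i_{n-1}+2$, $\sum_i p^i_{n-1}+2m$, etc., where $p^i_{n-1}$ is player $i$'s utility in the final equilibrium of $\mathcal{B}_{n-1}$) are engineered to accomplish, and it is why the budget values themselves must grow with the recursion. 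Your ``geometric hierarchy of budgets of order $2^i$ at level $i$'' gestures at this but does not show that the required alternation of strictly improving flips is realizable, nor that no shortcut deviation exists at any of the exponentially many intermediate states. Until a concrete gadget is exhibited and these two properties are checked, the argument is a plausible program rather than a proof.
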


 \begin{figure}[!ht]
  \centering
  \subfigure[extension of $\mathcal{B}_{n-1}$ that is necessary to reset $\mathcal{B}_{n-1}$]{
  \includegraphics[width=.48\textwidth]{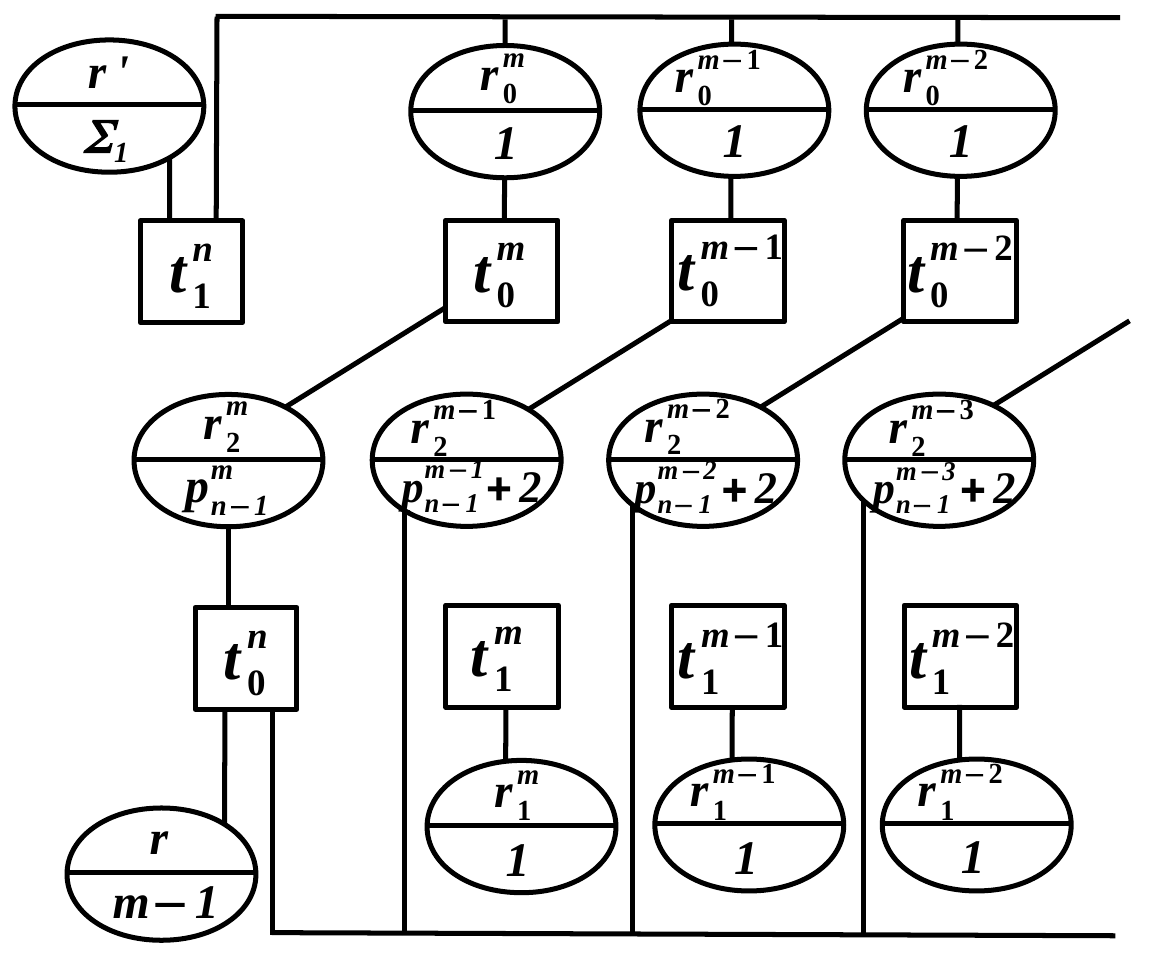}\label{fig:proofExponentialManySteps:a}}
  \subfigure[extension of $\mathcal{B}_{n-1}$ that is necessary to restart $\mathcal{B}_{n-1}$]{
  \includegraphics[width=.48\textwidth]{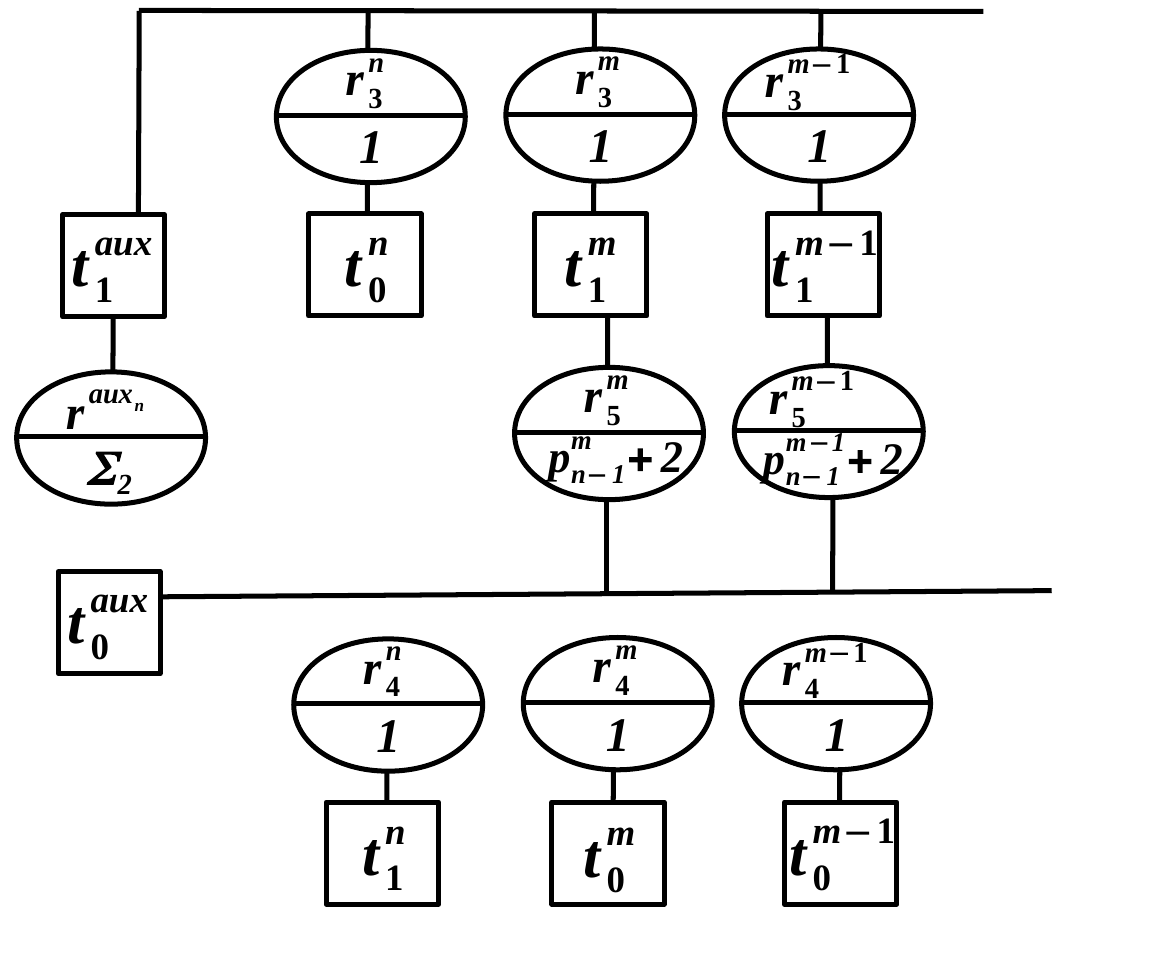}\label{fig:proofExponentialManySteps:b}}
  \caption{Construction of the ordered budget game $\mathcal{B}_n$. Figure \ref{fig:proofExponentialManySteps:a} shows the extension of $\mathcal{B}_{n-1}$ necessary to reset $\mathcal{B}_{n-1}$.
  Once all players $i$ are playing $\{t^i_1\}$, the player $n$ changes its strategy, creating for all others the incentive to go back to $\{t^i_0\}$. Here, we set $\Sigma_1 = \sum_{i=1}^{m} p^i_{n-1} + 2m$.  
  Figure \ref{fig:proofExponentialManySteps:b} shows the extensions of $\mathcal{B}_{n-1}$ which restart $\mathcal{B}_{n-1}$.
  The principle is the same, this time the player $aux$ is used to \textit{create} the new budget available.
  We set $\Sigma_2 = \sum_{i=1}^m (p^i_{n-1}) + m$. Legend in Figure \ref{fig:NPhardNE}.}
  \label{fig:proofExponentialManySteps}  
\end{figure}

\begin{proof}
 We give a recursive construction of the game $\mathcal{B}_n$.
 $\mathcal{B}_n$ contains the game $\mathcal{B}_{n-1}$,
 for which there is exactly one path of best-response improvement steps of length $\mathcal{O}(2^{n-1})$.
 $\mathcal{B}_{n-1}$ is executed once.
 Then it is reset to its original state and executed once more along the same path.
 In the end, $\mathcal{B}_n$ has reached a NE after $\mathcal{O}(2^n)$ steps.
 Each player has only two tasks and as a strategy, she can choose one of them, i.\,e. $\mathcal{S}_i = \{\{t^i_1\},\{t^i_2\}\}$.
 Labeling the strategies of player $i$ with $0_i$ and $1_i$, each strategy profile can be written as a binary number.
 The initial strategy profile $s_0$ can be regarded as 0 and the first execution of $\mathcal{B}_{n-1}$ counts up to $2^{n-1}-1$.
 The reset of $\mathcal{B}_{n-1}$ corresponds to increasing that value by 1 to $2^{n-1}$ and the second iteration of $\mathcal{B}_{n-1}$ continues counting up to $2^n-1$.
 In the final state of $\mathcal{B}_n$, every player $i$ plays strategy $1_i$.
 Since the strategies contain only single tasks, the ordering of these tasks on the resources is also an ordering of the players
 and we can abuse notation and say $i_1 \prec i_2$ for players $i_1$ and $i_2$ if $t_1 \prec_r t_2$ holds for any pair of tasks $t_1 \in \mathcal{T}_{i_1}$ and $t_2 \in \mathcal{T}_{i_2}$ and any resource $r$.
 
 In the following construction, for any pair of task $t$ and resource $r$,
 $t$ is either fully connected to $r$ or not connected to $r$ at all.
 Thus, $t(r)$ is either $b_r$ or $0$.
 In the following, \textit{connecting} a task $t$ to a resource $r$ means setting $t(r):=b_r$.
 
 We need a few new notations for our proof.
 The only NE that is reached in our construction is the state where every player $i$ plays strategy $1_i$
 and in which the players reach their final state in descending order, i.\,e. $i_1 \prec i_2$ for $i_1 > i_2$.
 Let $p^i_n$ be the utility of player $i$ in that NE for $\mathcal{B}_n$.
 For the ordered budget game $\mathcal{B}_n$, let $s^{n}_0$ be the initial strategy profile in which it is started.
 Let $i_1 \prec i_2$ for $i_1 > i_2$ in $s^{n}_0$.
 Intuitively, this means that players with a higher index get prioritized.
 
 %We need the following simple invariant for our proof: Let $\mathcal{B}_n$ be in an arbitrary state.
 %If a player $i$ who is currently playing strategy $1_i$ wants to change her strategy to $0_i$, her utility would be at least 2. 
 
 For $n=1$, we build an instance $\mathcal{B}_1$ with a single player:
 $\mathcal{N} = \{1\}$ with tasks $\mathcal{T}_1 = \{t^1_1,t^1_2\}$, strategy space $\mathcal{S}_1 = \{\{t^1_0\},\{t^1_1\}\}$ and two resources $\mathcal{R} = \{r_1,r_2\}$ with $b_1 = 1,$ $b_2 = 2$.
 We connect $t^1_1$ to $r_1$ and $t^1_2$ to $r_2$.
 The initial strategy profile is $s^1_0 = (\{t^1_1\})$ and after one improvement step, $\mathcal{B}_1$ is in an equilibrium.
 %The invariant holds.
 
 For $n>1$, we extend the game $\mathcal{B}_{n-1}$.
 Let $m$ denote the number of players in $\mathcal{B}_{n-1}$. 
 We split the rest of the proof in two parts. 
 First, we explore how to reset $\mathcal{B}_{n-1}$ to $s^{n-1}_0$.
 The structure is sketched in Figure \ref{fig:proofExponentialManySteps:a}.
 We introduce a new player $n$ with $\mathcal{T}_n = \{t^n_0,t^n_1\}$ and $\mathcal{S}_n = \{\{t^n_0\},\{t^n_1\}\}$.
 The initial strategy of each $i$ is $\{t^i_0\}$. We now add several new resources.
 \begin{center}
 \begin{tabular}{r|c|c|c|c|c}
 Resource \  & \ $r^1_0,\ldots,r^{m}_0$ \ & \ $r^1_1,\ldots,r^{m}_1$ \ & \ $r^i_2, i = 1,\ldots,m$ \ & \ $r$ \ & $r'$ \\ \hline
 Budget \  & 1 & 1 & $p^i_{n-1} + 2$ & \ $m-1$ \ & \ $ \sum_{i=1}^{m} p^i_{n-1} + 2m$
 \end{tabular}
 \end{center}
 For $i = 1,\ldots,m$, we connect $r^i_0$ to task $t^i_0$ and $r^i_1$ to task $t^i_1$.
 Since all budgets are 1, this does not influence the game $\mathcal{B}_{n-1}$.
 We connect all $r^i_0$ to $t^n_1$ and $r$ to $t^n_0$.
 Now, the initial utility of player $n$ is $u_n(s^n_0) = m - 1$ and when all other players $i$ play strategy $\{t^i_1\}$, player $n$ can improve her utility by 1 by switching to $\{t^n_1\}$.
 %Note that $r$ is exclusive for $n$ and thus the invariant also holds for $n$.
 
 It remains to extend the current game such that once player $n$ has switched to $t^n_1$, the remaining players $m,m-1,\ldots,1$ also switch their strategy to recreate $s^{n-1}_0$.
 For $i = 1,\ldots,m$, connect each resource $r^i_2$ to $t^n_0$ and $t^i_0$.
 This increases the utility of $t^n_0$ by $\sum_{i=1}^{m} p^i_{n-1} + 2m$.
 As a compensation, we connect the resource $r'$ to $t^n_1$.
 When $n$ switches to $t^n_1$, all the budgets of the resources $r^i_2$ become available again.
 This will cause the players $1,\ldots,m$ to change their strategies to $t^i_0$.
 Before switching, the utility of player $i$ is $p^i_{n-1} + 1$ due to the connection between $r^i_1$ and $t^i_1$.
 %According to the invariant, 
 Switching the strategy improves this value by at least 1.
 By definition of strategy changes of a coalition with $p_\text{fix}$, we have $i_1 \prec i_2$ for all $i_1 > i_2$ and thus the resulting strategy profile is identical to the initial one for players $0, \ldots, m$.
 
 To restart the game $\mathcal{B}_{n-1}$, we apply a similar trick as before.
 The construction is sketched in Figure \ref{fig:proofExponentialManySteps:b}.
 We introduce an auxiliary player $aux_n$ with $\mathcal{T}_{aux_n} =\{t^{aux_n}_0,t^{aux_n}_1\}$, $\mathcal{S}_{aux_n} = \{\{t^{aux_n}_0\},\{t^{aux_n}_1\}\}$ and the following resources.
 \begin{center}
 \begin{tabular}{r|c|c|c|c}
 Resource \ &  \ $r^1_3,\ldots,r^m_3, r^n_3$  \ &  \ $r^1_4,\ldots,r^m_4, r^n_4$  \ &  \ $r^i_5, i = 1,\ldots,m$ \  & $r^{aux_n}$ \\ \hline
 Budget \ &  \ 1  \ &  \ 1  \ &  \ $p^i_{n-1} + 2$ \  & \ $\sum_{i=1}^m (p^i_{n-1}) + m$  \  \\
 \end{tabular}
 \end{center}
 In $s^n_0$, we set $aux_n \prec n$ and initially, her strategy is $\{t^{aux_n}_0\}$.
 We connect $t^{aux}_0$ to $r^i_5$ for all $i\in\{1,\ldots,m\}$.
 Now, this auxiliary player starts the game with a utility of $\sum_{i=1}^m p^i_{n-1}+2m$.
 %Create an additional connection between $t^{aux}_0$ and $r'^{aux}$ to ensure that the invariant holds for player $aux$.
 We also connect $t^{aux}_1$ to $r^{aux_n}$ and to all resources $r^i_3$ for $i\in\{1,\ldots,m,n\}$.
 Finally, for every player $i=1,\ldots,m$, we connect $r^i_4$ to $t^i_0$, $r^i_4$ to $t^i_1$ and $r^i_5$ to $t^i_1$.
 For player $n$, we establish these connections the other way around, such that $r^n_3$ is connected to $t^n_0$ and $r^n_4$ to $t^n_1$.
 
 Again, the effects of $r^i_3$ and $r^i_4$ regarding $\mathcal{B}_{n-1}$ cancel out.
 Only when every player $i$ in $\mathcal{B}_{n-1}$ plays strategy $\{t^i_0\}$ and player $n$ plays strategy $\{t^n_1\}$,
 the auxiliary player will change to $t^{aux}_1$ and obtain a utility of $\sum_{i=1}^m (p^i_{n-1}) + 2m + 1$.
 This frees the budget of all resources $r^i_5$ and the utility of every task $t^i_1$ in $\mathcal{B}_{n-1}$ is increased by the same amount we increased the utility of $t^i_0$ in the first part of the construction.
 The game $\mathcal{B}_{n-1}$ is executed once more, only the player $n$ remains idle.
 When all players $i$ are playing strategy $\{t^i_1\}$, $\mathcal{B}_n$ has reached a NE.
 
 Thus, together with the auxiliary players we get a total of $2\cdot n - 1$ players in $\mathcal{B}_n$.
 and the number of steps to reach the NE is at least $2^n-1$.
 At the same time, the number of tasks and resources is polynomial in $n$.
\end{proof}

\bibliographystyle{plain}
\bibliography{references}

\end{document}